\def\calL{{\mathcal{L}}}
\def\calU{{\mathcal{U}}}
\def\calM{{\mathcal{M}}}
\def\calT{{\mathcal{T}}}
\def\calP{{\mathcal{P}}}
\def\edges{{\mathcal{E}}}
\newcommand{\alg}[1]{\textbf{#1}}   %% Notation for algorithm names
\def\Pro{{\mathbb{P}}}
\def\Ex{{\mathbb{E}}}
\title{Distributed Connectivity of Wireless Networks}
\author{Magn\'us M. Halld\'orsson and Pradipta Mitra}
\address{ICE-TCS, School of Computer Science\\
 Reykjavik University\\
 101 Reykjavik, Iceland\\}
\email{mmh@ru.is, ppmitra@gmail.com}
\begin{document}

\begin{abstract}
We consider the problem of constructing a communication infrastructure
from scratch, for a collection of identical wireless
nodes. Combinatorially, this means a) finding a set of links that form a
strongly connected spanning graph on a set of $n$ points in the plane, and b) scheduling it efficiently in the SINR model of interference.
The nodes must converge on a solution in a distributed manner, having
no means of communication beyond the sole wireless channel. 

We give distributed connectivity algorithms that run in time
$O(poly(\log \Delta, \log n))$, where $\Delta$ is the ratio between
the longest and shortest distances among nodes.  Given that algorithm
without prior knowledge of the instance are essentially limited to
using uniform power, this is close to best possible. Our primary aim,
however, is to find efficient structures, measured in the number of
slots used in the final schedule of the links.
Our main result is algorithms that
match the efficiency of centralized solutions. 
Specifically, the networks can be
scheduled in $O(\log n)$ slots using (arbitrary) power control, and in
$O(\log n (\log\log \Delta + \log n))$ slots using a simple oblivious power scheme. Additionally, the networks have the desirable properties that
the latency of a converge-cast and of any node-to-node communication is
optimal $O(\log n)$ time.
%
%Assuming that the ratio between the longest and shortest distances among the nodes in $\Delta$ we propose an algorithm that connects then in $O(\log \Delta \cdot \log n)$. We then show that the constructed network can be scheduled faster using more efficient power control. Specifically we can schedule the network in $O((\log \log \Delta + \log n) \log^3 n)$ slots using mean power. Builing upon this algorithm,
%we propose another distributed algorithm that constructs a network of cost $O((\log \log \Delta + \log n) \log n)$ using mean power. We extend our algorithm to the case of using arbitray power assignments, and propose algorithms whose initial cost depend on a masure $g$ which is potentially better than dependence on $\Delta$.
%
\end{abstract}

\maketitle

\section{Introduction}
\label{sec:intro}

% Intro
We consider the problem of constructing a communication infrastructure
from scratch, for a collection of identical wireless
nodes. Combinatorially, this means finding a set of links that form a
strongly connected spanning graph on a set of points in the plane,
and scheduling it efficiently in the SINR model of interference.
The nodes must converge on a solution in a distributed manner, having
no means of communication beyond the sole wireless channel. The issue
is how quickly and how well: the time it takes to form the structure
and the efficiency of the final schedule produced. 

% Given a set of wireless nodes on the plane, how can we ensure the nodes are connected? Specifically, how can one select a set of links (a link
%is a directed edge between two nodes) that strongly connect the nodes in an efficient  manner?

% Importance of connectivity
The importance of creating a connected structure spanning a set of wireless nodes can hardly be overstated. 
This may underlie a ``multi-hop''  wireless network,
where any two nodes can communicate through path(s) specified by such a structure. In an ad-hoc network, such a structure may provide the
underlying backbone for synchronized operation of the network. In a wireless sensor network, the structure can double as an information
aggregation mechanism. % PM(redundant?): by using it as a rooted tree.  
% PM: is the following needed?
% In peer-to-peer communication, each node is both an information source and sink, 
% implying that any pair of nodes needs to have the possibility to communicate.
% % MMH1
% thus it is necessary to maintain the possibility of communication
% between any two pairs of nodes.

% Efficiency and interference
The efficiency of a structure is closely intertwined with the issue of interference, the distinguishing feature of wireless communication.
Interference implies that only a limited number of transmissions can
be successful simultaneously; this number depending on spatial
distribution of the links, power settings, etc. We adopt the SINR (or
physical) model of interference, that has been shown both
theoretically and experimentally to be a more faithful
representation of reality than many of the traditional graph-based models \cite{MaheshwariJD08,Moscibroda2006Protocol}. 

% The problem
Achieving an efficient schedule involves deciding power levels for the links -- which may either be fully instance-dependent (``arbitrary''), or be chosen in an ``oblivious'' manner, depending only on the length of each link.
Recent centralized results show that it is possible to connect any
link set using $O(\log n)$ slots \cite{SODA12}, whereas the use of
oblivious power is bound to involve a factor of $\log\log \Delta$
\cite{us:esa09full,FKRV09,SODA12}, where $\Delta$ is the ratio
between shortest and longest distance in the network.

% Distributed setting
Achieving connectivity is a distributed problem \emph{par excellence}.
Distributed algorithms often assume ``free'' local
communication. In contrast, since the purpose in this paper is to build a communication
infrastructure from scratch, we assume that 
the only mode of communication allowed is transmission in the single
wireless channel, which succeeds if the required
signal-to-interference-and-noise ratio is achieved.
We also do not assume a \emph{carrier sensing} primitive (see, e.g.,
\cite{ScheidelerRS08}) that allows
nodes to estimate the amount of activity on the channel.

% Efficiency: Limitations
Given that the nodes have no information about distances to nearby
nodes, they are in effect limited to using a pre-defined fixed power initially.
It is known that usage of such a simple power scheme can necessarily require a linear number of slots to connect the nodes \cite{MoWa06}.
%It is well known that on some simple instances, uniform power
% requires linear number of slots just to allow all nodes to communicate \cite{MoWa06}. 
A more
refined bound is $\log \Delta$, where $\Delta$ is the ratio between
maximum to the minimum distance among the nodes.
We provide a distributed algorithm that forms a (initial) connected network in time 
$O(\log \Delta \cdot \log n)$, which is probably close to the best possible.
% Another obvious lower bound is $\log n$, being needed to achieve any
% sort of high probability result.

% Efficiency and cross-layer
The quality or efficiency of the \emph{final} structure is another story.
Once the initial (and possibly inefficient) network is formed, 
we are interested in
retooling the network, still in a distributed fashion, but using the
existing network, to find improved connectivity structures. We provide two 
approaches to this. First, we show that the initial network has nice geometric
properties that allows us to use (distributed) power control to make it much more efficient. Second, we propose a more sophisticated approach --- instead of simply changing the power settings of the links of initial network, we leverage the initial tree to construct \emph{new} set of links (and their power settings) that can be scheduled even more efficiently, while still achieving connectivity.
This suggest a novel interplay
between different layers --- a network layer (i.e., the initial tree)  that goes back and retools both itself (choosing new links) and the MAC layer (changing power settings and schedules). 

\iffalse
% Our schema
We pursue the following strategy. Initially, with very
little information, it is unlikely that the nodes can compute an
efficient network comparable to the best possible by a centralized
algorithm. However, once a network is formed, we are interested in
retooling the network, still in a distributed fashion, but using the
existing network, to find improved connectivity structures.  This can
be thought of as a ``cross-layer'' optimization. Indeed, this is exactly
what we show. Having constructed a network based on an algorithm where
all nodes use the same power, we show that the resultant network has
properties that allow us to use power control to improve the cost of
the network.
\fi

% The challenge
The challenge raised in this paper can then be stated as follows:
\begin{quote}
Is there a distributed algorithm, running in time 
$O(poly(\log \Delta, \log n)$, 
that results in a nearly optimal strongly connected structure in the SINR model?
\end{quote}
% If possible, we would like the final structure to match the attainable quality of centralized solutions.

% Our schema
We answer this question affirmatively, giving algorithms that match
the best upper bounds known for centralized algorithms. This holds
both for oblivious power assignments as well as when allowing
arbitrary power assignments. In particular, using arbitrary power, we
find and schedule a bidirectional tree in $O(\log n)$ slots that has
the property that both aggregation computation and any pairwise
communication can be achieved in optimal logarithmic time.

% Organization
The rest of the paper is organized as follows. We introduce the model
and key definitions in Sec.~\ref{sec:model},
and discuss related results in Sec.~\ref{sec:related}.
Our results are described in Sec.~\ref{sec:results}. 
Section \ref{sec:technical} contains technical definitions and
clarifications that are essential for the analysis but not needed to
understand the results. The algorithm for the initial network construction is given and analyzed in
Sec.~\ref{sec:main}. Our two approaches to finding extremely efficient schedules are presented in Sec.~\ref{sec:sparsitypc1} 
and Sec.~\ref{sec:optimal}, respectively.
Several proofs and construction details have been deferred to appendices.
% , with details on the power control part deferred to
% Appendix \ref{app:pcsampling}. Several proofs have also been moved to appendices.

\section{Related Work}
\label{sec:related}

% Connectivity in the SINR
Connectivity was the first problem studied from a worst-case
perspective in the SINR model. In a seminal paper, Moscibroda and Wattenhofer \cite{MoWa06}
formalized the problem and proposed an algorithm that connects
arbitrary set of $n$ points in $O(\log^4 n)$ slots. This was improved
to $O(\log^3 n)$ \cite{moscibroda06b}, $O(\log^2 n)$ 
\cite{Moscibroda07}, and recently to $O(\log n)$ \cite{SODA12}.  
All these works deploy centralized algorithms.
No non-trivial lower bound is known.
Somewhat orthogonally, a large body of work exists on randomly
deployed wireless networks, starting with the influential work by
Gupta and Kumar \cite{kumar00}. Work in this setting for connectivity includes
\cite{Bettstetter01012004}, which studied the probability of
there existing a path between two nodes in a randomly deployed network.
In \cite{RodopluMinEnergy}, minimum energy connectivity structures is
studied for randomly deployed networks, but interference is
essentially ignored.

% Distributed connectivity
Distributed connectivity of wireless networks has also been the subject of research. In \cite{Zavlanos07distributedconnectivity}, connectivity in mobile networks was studied from
a graph-theoretic perspective with no explicit interference model. Indeed, connectivity maintenance problem has been well studied in control theory and 
robotics \cite{Zavlanos07distributedconnectivity,Meng05distributedconnectivity,Dimarogonas10}, but with different underlying assumptions, typically without the use of the SINR interference model. Sensor connectivity has also been studied \cite{Huang:2007:DPE:1210669.1210674} without reference
to any particular interference model. In \cite{ramanathandistributedconnectivity}, a heuristic was proposed for connectivity maintenance in
multi-hop wireless networks. A more rigorous study was done in
\cite{Wattenhofer01distributedtopology} but with the assumption of an
underlying MAC layer that resolves interference problems.

% Connectivity-related problems in protocol-type models
% Graph-based models, including unit-disc graphs and the protocol model,
% have long been the model of choice for theoretical work on wireless
% computing. TO BE CONTINUED

% Scheduling and capacity
Two fundamental problems that deal with a given set of links
relate to this work.
\emph{Capacity}: find the largest feasible subset of links, and
\emph{Scheduling}: partition the link set into the fewest number of
feasible sets. 
For the former, constant-factor algorithms were given for uniform
power \cite{GHWW09,HW09}, mean and linear power (and most other
oblivious power assignments) \cite{SODA11}, and power control
\cite{KesselheimSoda11}. These imply a logarithmic factor for the
corresponding scheduling problems. Distributed algorithm was given
for \emph{Scheduling} with oblivious power \cite{KV10} and shown to achieve
$O(\log n)$-approximation \cite{icalp11}. 

% Distributed algorithms for SINR
Distributed algorithms have also been given for local broadcasting 
\cite{Goussevskaia2008Local} and dominating set \cite{ScheidelerRS08}
in the SINR model. Both of these problem are, however, local in nature.

The Minimum-Latency Aggregation Scheduling problem is closely related
to connectivity, where the latency for transmitting messages to a sink
is to be minimized. A large literature is known, but the first
worst-case analysis in the SINR model was given in
\cite{DBLP:conf/mswim/LiHWL10}, with a $O(\log^3 n)$ bound on the
schedule length by a centralized algorithm and $O(\log \Delta)$ by a
distributed algorithm.  The centralized bound was improved to optimal
$O(\log n)$ in \cite{SODA12}.

\section{Model and Preliminaries}
\label{sec:model}

Given is a set $P$ of $n$ wireless nodes located at points on the plane. Without loss of generality, assume that the minimum distance 
between any two points is $1$.
The nodes have synchronized clocks, and start running the distributed
algorithm simultaneously using slotted time. Each node knows its
location and has a globally unique ID. A single message is large
enough to contain the ID and the location of a node. A receiver of a
message thus always knows its distance from the sender and can
identify the sender uniquely. 

A \emph{link} is a directed edge between two nodes, indicating a
transmission from the first node (the sender) to the second (the
receiver). A link between $u$ and $v$ is denoted by $(u, v)$;
$\ell$ will also be used to indicate a generic link. A link set $L$
naturally induces a set of senders $S(L)$ and a set of receivers
$R(L)$.  
The link $(y, x)$ is known as the \emph{dual} of link $(x, y)$,
following \cite{KV10}. 
A link set $X$ is a dual of set $Y$ if $X$ consists of the duals of
the links in $Y$.
%Link sets $X$ and $Y$ are called each others
%dual if they each contain duals of every link of the other set. 
The degree of a node $u$ in a linkset $L$ is the number of links
incident on $u$ in $L$. The \emph{distance} between two nodes $u$ and $v$ is denoted by $d(u, v)$ (this is also the \emph{length} of the link $(u, v)$).
 Let $\Delta$ denote the maximum length of a possible link.  A
\emph{length class} refers to a set of links whose lengths differ by a
factor of at most 2.

In the SINR model of interference, a non-transmitting node $v$
successfully receives a message transmitted by node $u$ if,
\begin{equation}
\frac{P_u/d(u, v)^{\alpha}}{N + \sum_{w \in S \setminus \{u\}} P_w/d(w, v)^{\alpha}} \geq \beta\ ,
\label{gen_sinr}
\end{equation}
where $N$ is the ambient \emph{noise}, $\beta$ is the required SINR level,
$\alpha > 2$ is the so-called path loss constant, $P_w$ is the \emph{power} used
by node $w$, and $S$ is  the set of senders transmitting simultaneously.
A set $L$ of links is \emph{feasible} if the above constraint holds for all 
$v \in R(L)$ where $S = S(L)$. We do not impose any limit on the power a node can use.

The goal is to identify a set $\calT$ of links that both \emph{strongly
  connects} the wireless nodes and can be scheduled efficiently (i.e.,
can be partitioned into few feasible sets). Additionally, we seek low
latency constructions. 

A \emph{converge-cast}
tree is a directed rooted spanning tree where all links are oriented towards
the root (i.e., for each link, the receiver is a parent of the
sender).  An \emph{aggregation tree} is a converge-cast tree along
with a schedule of the links that has the property that each link $(x,
y)$ in the tree is scheduled after all links involving descendants of
$x$.  A \emph{dissemination tree} is the opposite: a broadcast tree
(spanning arborescence) with links oriented away from the root, with the
opposite property for the schedule. 
In both cases, the scheduling order follows link directions and paths in the trees.

\begin{defn}
A \textbf{bi-tree} is an aggregation
tree with a complementary dissemination tree, using the same links in
the opposite direction and same schedule in opposite order.
\end{defn}

Note that with a bi-tree, any node-node communication can be achieved
within time equal to the length of the schedule. The same holds for
computing an aggregation or a broadcast.

The following power assignments are of interest. An oblivious power
assignment is one where power assigned to a sender $u$ is a (simple)
function of $d(u, v)$, where $v$ is the intended receiver. The
oblivious assignment we are most interested in is the ``mean power''
assignment $\calM$ where $P^\calM_u = d(u, v)^{\alpha/2}$. 
%PM(do we need these?) 
%MMH: We use them in the PC capacity algorithm; could define them there.
We  also use uniform power $\calU$ that assigns the same power to all transmitting nodes, and the ``linear power'' assignment $\calL$ where
$P^\calL_u = d(u, v)^{\alpha}$.  
Note that a sender can transmit to
different receivers at different times, and may use different
powers. Finally, we also consider solutions achievable with
arbitrary power assignments, where the algorithm is free to use any
assignment.  We let $\Upsilon = O(\log\log \Delta + \log n)$ denote the
best ratio known for the cost of using oblivious power; namely, it is
known that for any  set of links, the ratio between the maximum size of feasible subset using arbitrary power vs.\ using mean power is at most $\Upsilon$ \cite{us:esa09full,SODA11}.

\section{Our Results}
\label{sec:results}

We give the first distributed algorithms with performance guarantees for connectivity problems in the SINR model. 
We first provide a basic algorithm: 
%Recall that a bi-tree is a 

\begin{theorem}
There exists a distributed algorithm that computes a bi-tree $\calT$
in $O(\log \Delta \cdot \log n)$ slots.
\label{thm:1}
\end{theorem}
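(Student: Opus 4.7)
The plan is to run a hierarchical clustering procedure in $\lceil\log\Delta\rceil$ phases, where phase $i$ operates at distance scale $2^i$. Initially every node forms a singleton cluster led by itself. In phase $i$, every participating node uses uniform power $P_i = \Theta(2^{i\alpha})$, chosen so that a transmission of length up to $2^i$ can succeed whenever senders are suitably separated. The phase consists of $\Theta(\log n)$ slots; in each slot, every current cluster leader transmits its ID and location with a constant probability $p$ and listens otherwise. Any node that decodes a clean transmission from a leader at distance at most $2^i$ adopts that leader as its parent in the tree under construction and hands cluster leadership to it, so each successful decoding installs exactly one tree link directed from the newcomer up to the recruiter.

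The key claim to prove is that each phase succeeds after $O(\log n)$ slots with high probability. I would maintain the invariant that leaders are pairwise $\Omega(2^{i-1})$-separated at the start of phase $i$. By the standard planar packing argument, combined with $\alpha>2$, this bounds by a constant the number of other leaders within interference range of any fixed receiver when power $P_i$ is used at range $2^i$, so an isolated-transmitting leader delivers its packet. With an appropriately tuned $p$, Aloha then gives each leader a clean slot for each fixed receiver at distance at most $2^i$ with constant probability; a Chernoff/union bound over the $\Theta(\log n)$ slots and over the $O(n)$ close leader--receiver pairs per phase makes every ``should merge'' pair actually merge with probability $1-1/\mathrm{poly}(n)$. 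Union-bounding across the $O(\log\Delta)$ phases preserves high probability overall, and since after phase $\lceil\log\Delta\rceil$ the range exceeds the diameter, only one cluster survives, so the final directed graph is a spanning tree.

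To obtain the bi-tree, I would re-use the construction slots as the schedule. A node recruited in phase $i$ is hung as a child of its recruiter, which itself may only be recruited later in some phase $j>i$; hence the link leading up from any node is always installed strictly later than every link leading up from its descendants, which is precisely the aggregation order required by the definition. Taking the construction slot assignment with the direction reversed (child sends to parent) gives the aggregation schedule, and reversing the slot order recovers the matching dissemination schedule on the same tree. For SINR feasibility of the reversed direction I would pick the separation constant in the invariant large enough that recruits within a single slot are themselves $\Omega(2^i)$-separated (their senders are that far apart, and each recruit lies within $2^i$ of its own sender), so the reverse transmissions satisfy SINR up to the same constants as the forward ones.

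The step I expect to be the main obstacle is the inductive calibration of the separation invariant against the receive threshold: the invariant must be strong enough to guarantee forward SINR when many leaders transmit concurrently, yet it must survive the merging rule even though a recruited node can sit close to a foreign leader. Showing that at most one merger commits per local $2^i$-ball per slot (forced implicitly by SINR, since simultaneous clean reception in a common neighborhood is ruled out) and that the surviving leader set is therefore still $\Omega(2^i)$-separated going into phase $i+1$ is where I would expect to spend most of the care, and it is also what makes the reverse-direction feasibility argument for the bi-tree go through.
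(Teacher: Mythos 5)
Your forward construction is essentially the paper's algorithm: geometric rounds indexed by distance scale $2^r$, uniform power $\Theta(2^{r\alpha})$ per round, slotted Aloha with a constant transmission probability $p$, a packing argument over annuli (using $\alpha>2$) to bound expected affectance by $O(p)$, and the inductive invariant that active nodes entering round $r$ are $2^{r-1}$-separated, so that $\Theta(\log n)$ slots per round suffice w.h.p. That part is sound and matches Lemmas 6.1--6.2 of the paper (modulo the cosmetic difference that in the paper the \emph{broadcaster} becomes the child and drops out, whereas in your scheme the \emph{listener} does).

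The genuine gap is in the bi-tree step. Each of your construction slots certifies SINR feasibility of only one direction of each link (recruiter $\to$ recruit); you then assert that the reversed links can reuse the same slot because the recruits of that slot are $\Omega(2^{i-1})$-separated while their links have length up to $2^i$. But that separation-to-length ratio is only a constant, so a single concurrent reverse transmission from a neighboring recruit already contributes affectance on the order of $2\beta\, 2^{\alpha} > 1$: the forward direction survives exactly this packing bound only because the transmitters are a $p$-thinned random subset of the leaders (the factor $p$ is what pushes the expected affectance below $1/2$), and the set of \emph{all} recruits of a slot enjoys no such thinning. Feasibility of a link set does not transfer to its dual under uniform power, which is why the paper cannot and does not argue this way. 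Its device is the two-slot handshake: the listener returns an acknowledgment with an \emph{independent} probability $p$, so the reverse transmissions are themselves a $p$-thinned set and the same affectance calculation applies; a link is committed only when both the broadcast and the acknowledgment actually succeed, so both directions of every stored link are certified feasible in their assigned slots. Note also that in your version the untested direction is child $\to$ parent, i.e., the aggregation (converge-cast) tree itself, so this is not a side issue. You would need to add an acknowledgment slot (or an equivalent independent thinning of the reverse transmissions) to close this. A minor secondary point: when union-bounding over the $\lceil\log\Delta\rceil$ phases you should note, as the paper does, that $\log\Delta$ may exceed any polynomial in $n$; the bound survives because at most $O(n^2)$ length classes are nonempty and nothing can fail in an empty one.
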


We can improve this solution by using
scheduling with non-uniform (but oblivious) power assignments.
Recall $\Upsilon = O(\log \log \Delta + \log n)$.

\begin{theorem}
The bi-tree $\calT$ can be re-scheduled in $O(\Upsilon \cdot \log^3
n)$ slots using mean power.
\label{thm:2}
\end{theorem}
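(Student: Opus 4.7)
The plan is to show that the initial bi-tree $\calT$---built with uniform power via Theorem~\ref{thm:1}---admits an efficient centralized mean-power schedule, and then to apply a known distributed scheduler that loses only an additional logarithmic factor. The starting point would be a structural sparsity claim: since $\calT$ was produced by a uniform-power procedure, the links of $\calT$ cannot be packed too densely within any length class. Concretely, I would argue that for each length class $\calT_i$ of $\calT$, only $O(1)$ (or at worst $O(\log n)$) senders or receivers of links in $\calT_i$ can cluster within any disk whose radius is proportional to the class length, because uniform-power reception would otherwise have failed during the construction phase.

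Given this sparsity, the next step is to bound the centralized mean-power schedule length for $\calT$. Within a single length class, mean power is equivalent to uniform power up to constants, so the sparsity directly yields an efficient per-class schedule; bridging across the $O(\log \Delta)$ length classes then costs the mean-versus-arbitrary-power factor $\Upsilon$. An equivalent route is to apply the centralized $O(\log n)$-slot scheduling result of \cite{SODA12} for arbitrary power to $\calT$ and then pay $\Upsilon$ to translate it to mean power. A careful accounting---tracking inter-class interference between long and short links, which is where the $\calM$ assignment does its work---should yield a centralized mean-power schedule of length $O(\Upsilon \log^2 n)$ slots for $\calT$.

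For the distributed conversion, I would invoke the $O(\log n)$-approximate distributed scheduling algorithm of \cite{icalp11} for oblivious power, run on $\calT$ with mean power $\calM$. This turns the centralized bound into a distributed schedule of $O(\Upsilon \log^3 n)$ slots. The algorithm requires some inter-node coordination, but this can be orchestrated using the existing (inefficient) schedule of $\calT$ from Theorem~\ref{thm:1}, invoked only a bounded number of times so that its $O(\log \Delta \cdot \log n)$ cost does not dominate the final bound.

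The hard part will be pinning down the sparsity statement: it must be strong enough that mean power tightly exploits it (so the centralized $O(\Upsilon \log^2 n)$ bound really holds), yet it must be derivable from the distributed, information-limited uniform-power construction of Theorem~\ref{thm:1}. A subsidiary technical point is ensuring that the distributed scheduler of \cite{icalp11}, which was designed assuming a given link set is already known to all participants, can actually be bootstrapped on $\calT$ using only the initial schedule of $\calT$ as a control channel without additional synchronization overhead.
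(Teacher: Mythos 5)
Your proposal follows essentially the same route as the paper: establish that $\calT$ is $O(\log n)$-sparse (the paper does this by showing that at most one node in any small disc can be incident to a much longer link, via the distance-levels invariant, and then invoking the $O(\log n)$ degree bound), feed this into the sparsity-to-scheduling machinery of \cite{SODA12} to get a centralized mean-power schedule of $O(\Upsilon\log^2 n)$ slots, and finish with the $O(\log n)$-approximate distributed scheduler of \cite{KV10,icalp11}. The only caveat is your alternative route (apply the arbitrary-power $O(\log n)$ schedule and ``pay $\Upsilon$''), which does not work as stated since the $\Upsilon$ factor converts feasible-subset sizes rather than arbitrary schedules; but your primary argument matches the paper's.
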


We then 
intersperse the connectivity-building and the scheduling to get
solutions matching the best centralized solutions known. 

\begin{theorem}
There exists a distributed algorithm (building on the first one)
that finds and schedules a bi-tree in $O(\log n)$ slots (with arbitrary power), 
using time $O(\Upsilon \cdot \log \Delta \cdot \log n)$.
A variation finds and schedules a bi-tree in $O(\Upsilon \cdot \log
n)$ slots with mean power, using time $O(\Upsilon \log \Delta \cdot \log^2 n)$.
\label{thm:3}
\end{theorem}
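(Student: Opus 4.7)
The plan is to bootstrap off the bi-tree $\calT_0$ produced by Theorem~\ref{thm:1}, then use it as a communication backbone to build a much more efficient bi-tree $\calT^*$. The approach interleaves construction and scheduling, rather than treating them as two separate phases.

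First I would invoke Theorem~\ref{thm:1} to obtain $\calT_0$ in $O(\log \Delta \cdot \log n)$ slots. By the definition of a bi-tree, $\calT_0$ supports converge-cast, dissemination, and any point-to-point exchange in time $O(\log \Delta \cdot \log n)$. Crucially, multiple independent local computations at different nodes can be pipelined within a single converge-cast (respectively, dissemination) pass, so arbitrary amounts of local computation between rounds are essentially free; the only scarce resource is the number of global communication passes on $\calT_0$.

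Next I would simulate, in a distributed fashion over $\calT_0$, the centralized construction of \cite{SODA12} that produces a strongly connected link set admitting an $O(\log n)$-slot schedule under arbitrary power. The simulation must (a) aggregate at a coordinating node whatever global information the centralized algorithm depends on (cluster representatives, length-class membership, candidate replacement links, etc.), (b) compute and annotate the new link set together with its slot assignment, and (c) disseminate this assignment back. I expect the centralized procedure to decompose into $O(\Upsilon)$ structural stages—corresponding to the hierarchical levels that yield the $O(\log n)$-slot schedule—each implementable by a constant number of global passes on $\calT_0$ at a cost of $O(\log \Delta \cdot \log n)$ slots apiece. Total construction time is thus $O(\Upsilon \cdot \log \Delta \cdot \log n)$, and the resulting $\calT^*$ is scheduled in $O(\log n)$ slots with arbitrary power, as desired.

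For the mean-power variation, I would replace the arbitrary-power centralized oracle with its mean-power counterpart, paying an extra factor of $\Upsilon$ in final schedule length (the known oblivious-vs-arbitrary gap, as in Theorem~\ref{thm:2}). Because mean-power schedulability is tied to length classes, the distributed simulation must now process the $O(\log \Delta)$ classes with an additional $O(\log n)$ coordination overhead, yielding the stated $O(\Upsilon \cdot \log \Delta \cdot \log^2 n)$ construction time. The main obstacle I foresee is in the second step: implementing the centralized construction faithfully through $\calT_0$ with only $O(\Upsilon)$ rounds of overhead. In particular, once a new candidate link leaves $\calT_0$'s structure, we have no a priori schedule that lets it transmit, so the argument will likely have to bootstrap partial schedules as hierarchical levels are processed and exploit the geometric regularity that Theorem~\ref{thm:1}'s tree is stated to possess.
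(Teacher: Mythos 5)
Your high-level plan --- use $\calT_0$ as a backbone to centrally simulate the \cite{SODA12} construction at a coordinator and disseminate the answer --- is not the paper's route, and it has a gap that I do not think can be repaired within the stated time budget. The bottleneck is step (a) of your simulation: the centralized algorithm needs the global geometry of the instance (all node positions, or at least enough of them to run the sparsity/capacity arguments), but the model only allows a single message to carry one ID and one location, and the root of $\calT_0$ receives messages over its incident links one slot at a time. Aggregating the information the coordinator needs therefore costs $\Omega(n)$ slots, not $O(\log\Delta\cdot\log n)$ per ``pass''; a converge-cast is cheap only for associative aggregates, not for collecting the raw instance. Your claim that the centralized procedure ``decomposes into $O(\Upsilon)$ structural stages'' is also unsupported: $\Upsilon=O(\log\log\Delta+\log n)$ is the oblivious-versus-arbitrary power gap, not a count of hierarchical levels in \cite{SODA12}, and nothing in your argument explains why each stage would need only $O(1)$ global passes. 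Finally, the obstacle you yourself flag --- that newly proposed links have no schedule under which to be tested --- is exactly the part that needs an idea, and you leave it open.

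The paper avoids all of this by never centralizing anything. It runs an iterative peeling process (\textbf{TreeViaCapacity}): in each iteration it reruns the \emph{initial} tree construction \alg{Init} on the currently unconnected (``top level'') nodes, extracts from the resulting tree an $O(1)$-sparse subset $\calT(M)$ (links induced by low-degree nodes, a purely local test), and then selects a feasible subset $\calT'$ of it by having links actually transmit and measure SINR --- a distributed implementation of Kesselheim's capacity algorithm for arbitrary power, or $\Theta(1/\Upsilon)$-probability sampling for mean power. Each iteration removes a constant fraction (resp.\ $\Omega(1/\Upsilon)$ fraction) of the remaining nodes and contributes one slot to the final schedule, so $O(\log n)$ (resp.\ $O(\Upsilon\log n)$) iterations suffice, and the per-iteration cost is dominated by the $O(\log\Delta\cdot\log n)$ call to \alg{Init}. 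This is why the schedule length and the construction time both follow, and why the new links can be validated: they are tested by live transmission in the presence of the already-selected links, so no a priori schedule for them is needed. To make your proposal work you would need to replace the coordinator-based simulation with some such local, measurement-driven selection mechanism.
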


In particular, the bi-tree property ensures that aggregation,
broadcast, and pairwise communication can all be achieved in optimal
$O(\log n)$ steps.

% Technical contributions
Technically, this work combines ingredients from numerous recent
works on the SINR model
\cite{us:esa09full,SODA11,KesselheimSoda11,KV10,icalp11,SODA12}. In addition,
we derive a number of properties, most of which deal
with the concept of \emph{affectance} in relation to connectivity
structures; intuitively, affectance measures the interference of one transmission on the reception of another transmission, relative to the signal strength of the latter.
% MMH1
We explicitly define a previously considered geometric
property of \emph{sparsity}, and show it to imply small average
affectance. We give novel algorithms for finding large feasible
subsets in such sparse link sets.  And, we introduce randomized
transmission strategies to estimate affectance in terms of transmission
successes.

\section{Technical Notes}
\label{sec:technical}

% Prior knowledge
Our algorithms require the following knowledge about the instance:
The number of nodes, $n$, up to a polynomial factor; 
the minimum distance (assumed to be 1); and the maximum distance $\Delta$.
We do not treat $\Delta$ as a constant, although it is small in
many systems. Knowledge of $\Delta$ is mainly needed for stopping
criteria; it can be avoided by computing the size of the tree, if
precise knowledge of $n$ is available.

% Broadcast and ack messages
In describing our algorithm, we refer to some messages as
\emph{broadcasts} and some as \emph{acknowledgments}. In terms of if
and how these messages succeed, they are identical and work as
dictated by Eqn.~\ref{gen_sinr}. The difference lies in when these
messages are transmitted and what they contain. A broadcast refers
to an exploratory message sent to no node in particular, only containing
the sender's ID and location.  An acknowledgment is transmitted as a
response to a previous message (typically a broadcast) and 
contains IDs of both the sender (the acknowledger) and the 
initial broadcaster. Thus, receivers receiving an acknowledgment can
determine if it was addressed to them or not.

% % Aggregation trees
% The network formed by all our algorithms will be the union of 
% an aggregation tree and a dissemination tree with the same root. Indeed,
% the links of the dissemination tree will be precisely the duals of the
% aggregation tree.\footnote{Fold this paragraph into Sec.~3 or 1}

% w.h.p.
All our results are proved to be true ``with high probability'' 
(w.h.p., for short), where the term means that the relevant event 
occurs with probability $1 - \frac1{n^c}$, for some suitably large
$c$\footnote{This can be amplified to hold for \emph{any} $c$, by scaling up the constant factors.}. We frequently prove a lemma to hold, w.h.p., for a node $u$, or a
link $(u, v)$. It will always be clear that such a result can be
safely union bounded over all nodes, or all possible links, to derive
a high probability result for the whole algorithm.  The only case that
needs care is when we union bound over slots in the
algorithm. The number of slots in our first algorithm is a
function of $\log \Delta$, which can be arbitrarily larger than
$n$. Union bounding is still safe for the following reason.
The algorithm proceeds by considering links belonging to the same length class, and there can be at most $\log \Delta$ of such classes (thus the dependence on $\log \Delta$). However, since there are at most $n^2$ links in the network, only $n^2$ classes can actually be non-empty (in the full version, we provide a more refined $O(n)$ upper bound). During empty length classes, nothing happens with probability $1$ and thus the union bounding incurs no ``cost''.

% This $\log \Delta$ comes from the fact that the links in the network
% can be partitioned into at most
% % However, $\log \Delta$ indicates the number of possible length
%classes \footnote{Needs rephrasing.}, and only $O(n)$ of them can be non-empty (see full version). 
% MMH

%\paragraph{Affectance:}
\emph{Affectance. }
We use the notion of \emph{affectance}, introduced in
\cite{GHWW09,HW09} and refined in \cite{KV10} to the threshold-ed form
used here.  
The affectance $a^{\calP}_w(\ell)$ \emph{on} link $\ell=(u, v)$ \emph{from} a sender $w$,
with a given power assignment $\calP$,
is the interference of $w$ on $u$ relative to the power
received, or
%\begin{eqnarray*}
\[
a^\calP_{w}(\ell)
  = \min\left\{1 + \epsilon, c(u, v) \frac{P_w}{P_u} \cdot \left(\frac{d(u, v)}{d(w, v)}\right)^\alpha\right\}\ ,
\]
%\end{eqnarray*}
where $\epsilon$ is some arbitrary fixed constant (say $0.1$), $c(u, v) = \beta/(1 - \beta N d(u, v)^\alpha/P_u)$ depends
only on the parameters of the link $\ell$.
We drop $\calP$ when clear from context. 
For a set $S$ of senders and a link $\ell$,
$a_S(\ell) = \sum_{w \in S} a_w(\ell)$. 

Using such notation, Eqn.~\ref{gen_sinr} can be rewritten as
$a_{S}(\ell) \leq 1$, which we adopt. 
When dealing with links $\ell = (u, v)$ and $\ell' = (u', v')$ we mean
$a_{\ell}(\ell')$ to mean $a_{u}(\ell')$.
Extending this to a link set $L$, we use the notation $a_L(\ell)$
to mean $a_S(\ell)$ where $S = S(L)$ are the senders in $L$. For two sets $X$ and $Y$, $a_X(Y)$ thus means $\sum_{\ell \in Y} a_{S(X)}(\ell)$.
From its definition, it
is clear that $c(u, v) \geq \beta$. We require that
$c(u, v) \leq 2\beta$, and point out how to achieve this
during the description of the algorithms.
This simply means that nodes always transmit
with power high enough for the intended (or potentially intended, in
case of a broadcast) links to comfortably succeed in the presence of
noise (but no other interference).

\section{Initial Tree Construction}
\label{sec:main}

% Structure of algorithm
The general template for the algorithm is as follows. At any given time,
a subset of the nodes is \emph{active}, with initially all nodes active and in the end only one node.
Links are formed between pairs of active nodes, 
by a node $u$ broadcasting, and another node $v$ acknowledging that
message in the next round. When such a communication succeeds, links $(u, v)$ and 
$(v, u)$ become part of the network and node $u$ becomes inactive (and forms no further links). The still active node
$v$ is $u$'s parent in the eventual aggregation tree. The link $(u, v)$
is then part of the aggregation tree and the link $(v, u)$ is part of the dissemination
tree. 

In what follows, $\lambda_1, \lambda_2 \ldots $, $\gamma_1, \gamma_2
\ldots $ are constants. 

The algorithm proceeds in $\lceil \log \Delta \rceil$ rounds, each containing $\lambda_1 \log n$ slot-pairs (a slot-pair is simply two consecutive slots). Each node $u$ maintains a link set
$L_u$ storing incoming and outgoing links along with a time stamp.
% the time slot to schedule the link. 
The final set $\calT$ is then simply $\cup_u L_u$.
In this initial tree construction, slots in the schedule of the links correspond simply to the time stamps.

At the beginning of each slot-pair in round $r$, each active node decides to be 
a \emph{broadcaster} with iid probability $p$ ($p \leq \frac12$ to be determined), and \emph{listener} otherwise. Then,
\begin{itemize}
  \item 
During the first slot, a broadcaster $u$ transmits a message and a listener $v$ listens for messages. 
\item 
During the second slot, a listener $v$ that received a message from $u$ such that $2^{r-1} \leq d(u, v) < 2^r$ in 
the previous slot does the following with iid probability $p$:
add the links $(u, v)$ and $(v, u)$ to $L_v$ with appropriate slot numbers and return an acknowledgment.
A broadcaster $u$ listens for acknowledgments during this slot, and on receiving one (say, from $v$) adds $(u, v)$
and $(v, u)$ to $L_u$, and becomes \emph{inactive}.
\end{itemize}
% The algorithm terminates after round $\lceil \log \Delta \rceil$. 
% The last link added from each node (when it becomes inactive) belongs
% to an aggregation tree, with slot numbers in order, while the links to
% other nodes belong to the dissemination tree, whose schedule should be
% executed in the opposite order.

Note that a node only forms links with nodes at distance in the range
$[2^{r-1}, 2^r)$ during round $r$. Since each node knows this range it
can easily choose a power that ensures $c(u, v) \leq 2 \beta$ for
all $d(u, v) \in [2^{r-1}, 2^r)$.  Setting the power to $2 \beta N
2^{r \alpha}$ suffices.
We say that a link $(u, v)$ is \emph{successfully formed} between
nodes $u$ and $v$ during a slot-pair if all of the following happen:
a) the transmission $(u, v)$ is successful in the first slot, b) it is
successfully acknowledged in the second slot (i.e., the link $(v, u)$
successfully transmits), and c) both nodes store $(u, v)$ and $(v, u)$
in their set of links with the appropriate time stamps.  Note that
when this happens, $u$ becomes inactive, by the description of the
algorithm.
The sole link that is outgoing from a given node is also the last one
to be scheduled, thus ordering satisfies the leaf-to-root order of
aggregation trees.

\noindent \textbf{Remarks.} Two technical clarifications. First, note that
a listener $v$ can store a failed link, since it does not necessarily know 
whether an acknowledgment $(v, u)$ succeeded. However, this is not a problem, since:
a) Node $u$ remains active if the acknowledgment fails and connects itself later
to some node (or eventually becomes the root), b) Transmission of the link $(v, u)$ is not problematic for other links, since links transmitting in that slot \emph{did} succeed in the presence of that transmission. In any case, it is easy to efficiently ``clean up'' such stray links after the whole network is formed. Second,
as constructed, the dissemination tree has the opposite order of links in the schedule (links closer to the root are scheduled later, instead of earlier, as the definition calls for). This is also easily fixable after the network is formed by  a reversal process initiated by the root. We omit these details in this version.

\subsection{Analysis}

\noindent We first show that short links have a high probability of succeeding.
\begin{lemma}
Assume that at the beginning of round $r$, 
the minimum distance between active nodes is at least $2^{r-1}$.
Consider any slot-pair in the round and
active nodes $u$ and $v$ with 
% less than twice the minimum distance, i.e., 
$d(u, v) < 2^{r}$. Then, with 
probability at least $\frac14 p^2 (1-p)$, the link $(u, v)$ is
successfully formed in that slot-pair. Similarly, with probability at least $\frac14 p^2(1-p)$,
the link $(v, u)$ is successfully formed.
\label{smalllinksuccess3}
\end{lemma}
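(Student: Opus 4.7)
The plan is to decompose the event ``link $(u,v)$ is successfully formed in this slot-pair'' into four sub-events whose conditional probabilities I can bound one at a time, then multiply. Concretely, condition on: (i) $u$ flipping broadcaster and $v$ flipping listener at the start of the slot-pair; (ii) $v$ actually receiving $u$'s message in slot $1$; (iii) $v$ flipping the ack-coin that triggers an acknowledgment; (iv) $u$ actually receiving $v$'s acknowledgment in slot $2$. Event (i) has unconditional probability exactly $p(1-p)$ by independence of the role coins, and event (iii) has conditional probability exactly $p$ given (i)--(ii), since $v$'s ack-coin is independent of everything preceding. So the lemma reduces to showing that each of (ii) and (iv) occurs with conditional probability at least $1/2$, giving $p(1-p)\cdot\tfrac12\cdot p\cdot\tfrac12 = \tfrac14 p^2(1-p)$.

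For (ii), I bound the expected affectance $\Ex[a_B(\ell)]$ on $\ell=(u,v)$, where $B$ is the set of broadcasters in slot $1$ other than $u$. By independence of the role coins, each other active node is in $B$ with probability $p$, so by linearity $\Ex[a_B(\ell)]\leq p\sum_{w} a_w(\ell)$, summing over active $w \neq u,v$. With the algorithm's chosen uniform power in round $r$, $a_w(\ell)\leq 2\beta\,(d(u,v)/d(w,v))^\alpha \leq 2\beta\,(2^r/d(w,v))^\alpha$. The hypothesis that all active nodes are pairwise at distance at least $2^{r-1}$ gives a standard packing bound: the annulus of radii $[k\cdot 2^{r-1},(k+1)\cdot 2^{r-1})$ around $v$ contains $O(k)$ active nodes, each contributing $O(1/k^\alpha)$ to the sum. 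Summing over $k\geq 1$ gives $O(1)$ since $\alpha>2$, so $\Ex[a_B(\ell)]=O(p)$. Choosing the constant $p$ small enough makes this at most $1/2$, and Markov's inequality gives $\Pr((\text{ii})\mid (\text{i}))\geq 1/2$.

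For (iv), the same argument applies with the link $\ell'=(v,u)$ in place of $\ell$ and the packing centered at $u$. The only subtlety is that the set of other acknowledgers in slot $2$ is not sampled by an unconditional Bernoulli, but depends on slot-$1$ outcomes (a node $z$ acknowledges only if it listened in slot $1$ and received some valid broadcast). However, for any $z\ne v$, the independent ack-coin $Y_z$ is a necessary condition for $z$ to acknowledge, and $Y_z$ is independent of the slot-$1$ broadcasting coins and of $v$'s ack-coin $Y_v$. Hence $\Pr(z\text{ acks}\mid (\text{i}),(\text{ii}),(\text{iii}))\leq \Pr(Y_z=1)=p$, the same packing calculation goes through, and Markov gives $\Pr((\text{iv})\mid(\text{i}),(\text{ii}),(\text{iii}))\geq 1/2$.

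The main obstacle is the packing-plus-polynomial-decay affectance estimate; the calculation itself is standard given $\alpha>2$ and the minimum-distance hypothesis, but this is where the geometric content of the lemma lives, and it dictates how small the constant $p$ must be chosen. A minor second obstacle is tracking independence carefully enough in step (iv) to legitimize Markov, since slot-$1$ and slot-$2$ events are not unconditionally independent. The claim for $(v,u)$ is symmetric: swap the roles of $u$ and $v$ throughout.
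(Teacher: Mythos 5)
Your proof is correct and follows essentially the same route as the paper's: decompose into the role-coin events and the two transmission successes, bound the expected affectance via a packing argument over annuli using the minimum active-node distance $2^{r-1}$ and $\alpha>2$, and apply Markov's inequality after choosing $p$ small enough. Your explicit treatment of the independence of the ack-coins from the slot-1 outcomes in step (iv) is a point the paper dispatches with only ``a similar argument,'' so your write-up is, if anything, slightly more careful there.
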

\begin{proof}
Let $\rho = 2^{r-1}$. Let $M_r$ be the set of currently active nodes
and let $\ell = (u, v)$.
Let $B_r$ be the set of broadcasters during the slot-pair.
First,  note that by the description of the algorithm
\[ \Pro(u \in B_r \text{ and } v \not \in B_r) = p (1 - p)\ . \]

For $t = 0, 1, \ldots $ define $C_t$ to be the ball around $v$ of radius $\rho (t + 1)$ and define the annulus $A_t$ as
 $A_0 = C_0$, $A_t = C_{t} \setminus C_{t-1}$ for $t \geq 1$. From
this it is easily computed that the area of $A_t$ is
\begin{equation}
\text{Area}(A_t) = \pi \rho^2 (2t + 1)
\label{eqn:trarea}
\end{equation}

Now, by the definition of $\rho$, balls of radius $\frac{\rho}{4}$
around any pair of points in $M_r$ do not intersect (since the minimum distance between active nodes is $\rho$). Combining this with
Eqn.~\ref{eqn:trarea}, we see that $A_t$ contains at most $16 (2t + 1) \le 48t$
nodes in $M_r$.

For $x \in M_r \cap A_0$, $a_x(\ell) \leq 1 + \epsilon$, simply by the definition of affectance. For $x \in M_r \cap A_t$ for $t \geq 1$, $d(x, v) \geq \rho \cdot t$
and thus $a_x(\ell) \leq c_{\ell} \frac{(2\rho)^{\alpha}}{(\rho \cdot t)^{\alpha}} \leq 2 \beta  \left(\frac{2}{t}\right)^{\alpha}$, where $c_{\ell} \equiv c(u, v) \leq 2 \beta$. Note that 
for any $x$, $\Pro(x \in B_r) = p$.

Thus,
\begin{align*}
\Ex(a_{B_r}(\ell)) & = \Ex(a_{B_r \cap A_0}(\ell)) + \sum_{t \geq 1}\Ex(a_{B_r \cap A_t}(\ell)) \\
 & \le 16(1+\epsilon) p + p 2 \beta  \sum_{t \geq 1} \left(\frac{2}{t}\right)^{\alpha} 48 t \\
% \le & 16(1+\epsilon) p + p 96 \beta 2^{\alpha} \sum_{t \geq 1} \frac{1}{t^{\alpha-1}} 
 & \le 16(1+\epsilon) p + 96 p \beta 2^{\alpha} \frac1{\alpha - 2}\ ,
\end{align*}
using the bound $\zeta(x)=\sum_{n\ge 1} \frac{1}{n^x} \le \frac{1}{x-1}$ on the Riemann zeta function. 
Thus, for any $p \leq (64 (1 + 6 \beta 2^{\alpha} \frac1{\alpha - 2}))^{-1}$, 
we get that $\Ex(a_{B_r}(\ell)) \leq 1 /2$. By
Markov's inequality, $a_{B_r}(\ell) \leq 1$ with probability at least
$\frac{1}{2}$ (recall that this means that the link $\ell$ succeeds).
Thus,
\[ \Pro(a_{B_r}(\ell) \leq 1 \text{ and } u \in B_r \text{ and } v \not \in B_r) \geq \frac12 p (1 - p)\ ,\]
A similar argument proves that the probability of $\ell_r = (v, u)$ succeeding
is at least $\frac12 p$ and thus the link $(u, v)$ is formed with probability at least $\frac14 p^2 (1 - p)$.
The argument for the potential formation of link $(v, u)$ is identical.
\end{proof}

Now we can claim that,

\begin{lemma}
At the beginning of each round $r$, the distance between active nodes is at least $2^{r-1}$, w.h.p.
\label{distancelevels2}
\end{lemma}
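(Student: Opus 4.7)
The plan is to prove the lemma by induction on $r$. The base case $r=1$ is immediate since the minimum distance in the point set is assumed to be $1 = 2^0$, so certainly all active nodes (which is all of them) are at distance at least $2^{r-1} = 1$.

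For the inductive step, assume that at the start of round $r$, every pair of active nodes is at distance at least $2^{r-1}$. I want to show that with high probability, at the start of round $r+1$ every pair of active nodes is at distance at least $2^r$. Equivalently: for every pair $u,v$ of nodes that are both active at the start of round $r$ with $2^{r-1} \le d(u,v) < 2^r$, at least one of $u,v$ becomes inactive during round $r$.

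Here the inductive hypothesis exactly matches the precondition of Lemma~\ref{smalllinksuccess3}: the minimum distance between active nodes at the beginning of round $r$ is at least $2^{r-1}$. So as long as $u$ and $v$ are both still active at the start of a given slot-pair in round $r$, Lemma~\ref{smalllinksuccess3} applies to that pair and guarantees that with probability at least $q := \frac14 p^2(1-p)$ the link $(u,v)$ is successfully formed in this slot-pair, which by the algorithm's description causes $u$ to become inactive. Thus, conditioned on both $u$ and $v$ being active at the start of a slot-pair, the probability that at least one of them becomes inactive by the end of that slot-pair is at least $q$. Consequently, the probability that \emph{both} $u$ and $v$ remain active through all $\lambda_1 \log n$ slot-pairs of round $r$ is at most $(1-q)^{\lambda_1 \log n}$. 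Choosing $\lambda_1$ a sufficiently large constant (depending on $p$, $\alpha$, $\beta$ and the desired exponent $c$) makes this at most $1/n^{c+4}$.

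A union bound over the at most $n^2$ pairs of nodes and over the rounds yields the w.h.p.\ statement. As noted in Section~\ref{sec:technical}, although there are $\log \Delta$ rounds in total, only at most $O(n^2)$ of them contain active length classes, and during empty rounds nothing happens with probability one, so the union bound is safe. Combined with the inductive hypothesis, this establishes the claim at the start of round $r+1$, completing the induction. The only mildly delicate point is that the event in a later slot-pair depends on which nodes are still active, but this causes no trouble: we only use the one-sided bound that, \emph{conditional on both endpoints still being active} at the start of that slot-pair, the success probability is at least $q$, which is all the standard ``geometric trials'' argument requires.
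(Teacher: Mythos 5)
Your proof is correct and follows essentially the same route as the paper's (induction on $r$, applying Lemma~\ref{smalllinksuccess3} slot-pair by slot-pair to bound the probability that a close pair both survive the round by $(1-\tfrac14 p^2(1-p))^{\lambda_1\log n}$, then union bounding). You additionally spell out the conditioning subtlety and the union bound over the $\log\Delta$ rounds, which the paper's sketch leaves implicit.
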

\begin{proof} (Sketch.)
The claim is clearly true for round $1$ (since the minimum distance in the system
is $1$). Now inductively assume that it is true for round $r$. Consider any two nodes
$u, v$ that are active at the beginning of round $r+1$ with $d(u, v) < 2^{r+1}$.
Consider any slot-pair in which they are both active. By Lemma \ref{smalllinksuccess3}, the probability of both of them remaining active
after this slot pair is at most $1 - \frac14 p^2 (1 - p) \leq 1 - \frac18 p^2$.
Thus, the probability of both of them remaining active over $\lambda_1 \log n$
slot-pairs is $\leq (1 - \frac18 p^2)^{\lambda_1 \log n}$. Setting $\lambda_1 = \frac{80}{p^2}$, this probability can be upper bounded by $\frac1{n^{10}}$. This 
proves the Lemma (after union bounding).
\end{proof}

We can now prove the first main result.

\begin{proof}{[of Thm.~\ref{thm:1}]}
By Lemma \ref{distancelevels2} it is clear that within $O(\log \Delta)$ rounds, and
thus $O(\log \Delta \cdot \log n)$ slots, at most one active node remains (since the maximum distance among nodes is $\Delta$).
Since nodes only cease to be active by forming links with an active node, it
is also clear that exactly one node remains active. When nodes cease to be
active, they do so only by connecting in both directions to still-active nodes
(by the description of the algorithm). By induction, 
the whole network is then strongly connected to the single node
active at the end. This last active node is the root of both the aggregation and dissemination trees.
\end{proof}

We can also show that the network formed has low degree, where the degree 
of a node is its number $|L_u|$ of incident links.
%being the number of links in which a node participates, i.e., $|L_u|$).
%
\begin{theorem}
The probability of a link having degree $d$ is at most
%$(\frac1{e})^{\frac{p^2 d}{8}}$. As a result, the maximum degree is
$e^{\frac{- p^2 d}{8}}$. As a result, the maximum degree is
$O(\log n)$, w.h.p.
\label{seconddegbound}
\end{theorem}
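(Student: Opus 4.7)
The plan is to upper bound $|L_u|$ by counting the slot-pairs in which $u$ has any realistic chance of gaining a new neighbor, and then apply Lemma~\ref{smalllinksuccess3} in reverse to control that count via a geometric tail bound. Let $T_u$ denote the number of slot-pairs during the algorithm in which $u$ is still active at the start and there exists some other active node $v$ with $d(u,v) < 2^r$ in the then-current round $r$. The first observation is that every new neighbor of $u$ arises in some slot-pair counted by $T_u$, and at most one new neighbor can be added per slot-pair: as a listener $u$ successfully receives at most one broadcast by the SINR threshold, and as a broadcaster $u$ becomes inactive after its first successful acknowledgment. Since each successful interaction inserts both $(u,v)$ and $(v,u)$ into $L_u$, this yields $|L_u| \le 2 T_u$.

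For the per-slot-pair bound, I would apply Lemma~\ref{smalllinksuccess3} in the direction where $u$ is the broadcaster. In any slot-pair counted by $T_u$, pick any witnessing active neighbor $v$ with $d(u,v) < 2^r$; the lemma then guarantees that $(u,v)$ forms in that slot-pair with probability at least $\frac{1}{4} p^2(1-p) \ge p^2/8$ for $p \le \frac{1}{2}$, and this event in particular makes $u$ inactive. Iterating along the sequence of slot-pairs counted by $T_u$, with conditioning on the past, yields the geometric tail bound $\Pro(T_u \ge t) \le (1 - p^2/8)^t \le e^{-p^2 t/8}$. Combined with $|L_u| \le 2 T_u$, this gives $\Pro(|L_u| \ge d) \le e^{-\Omega(p^2 d)}$, matching the claimed form up to the constant in the exponent.

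For the $O(\log n)$ maximum-degree conclusion, taking $d = c \log n / p^2$ for a sufficiently large constant $c$ makes the per-node failure probability at most $1/n^2$, and a union bound over the $n$ nodes gives the w.h.p.\ result. Since $p$ is a constant, the maximum degree is $O(\log n)$.

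The main technical care will be in chaining Lemma~\ref{smalllinksuccess3} across many slot-pairs, since the lemma is stated unconditionally for a single slot-pair. I would handle this by conditioning on the algorithmic history up to the start of each slot-pair and observing that the coin flips drawn within that slot-pair are independent of the past, so the $p^2(1-p)/4$ estimate of the lemma transfers directly to the conditional setting. The minimum-distance hypothesis of Lemma~\ref{smalllinksuccess3} is supplied by Lemma~\ref{distancelevels2}, a w.h.p.\ global event whose complement can be absorbed into the final high-probability statement. With these details in place, the per-step bound implies the geometric tail by a routine induction on the slot-pair index.
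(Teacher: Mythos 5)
Your proposal is correct and follows essentially the same route as the paper: bound the degree by the number of slot-pairs in which some active node $v$ with $d(u,v)<2^r$ exists, note that each such slot-pair deactivates $u$ with probability at least $\tfrac14 p^2(1-p)\ge p^2/8$ by Lemma~\ref{smalllinksuccess3}, and conclude with a geometric tail plus a union bound. Your extra care about the factor of $2$ from storing both $(u,v)$ and $(v,u)$ and about conditioning across slot-pairs is sound and only changes the constant in the exponent, which is immaterial for the $O(\log n)$ conclusion.
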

\begin{proof}
Let $u$ be a node and consider any round $r$ and any slot-pair in the
round where $u$ is active. Suppose there is another active node $v$
with $d(u, v) < 2^r$. Then by Lemma \ref{smalllinksuccess3}, $u$
ceases to be active after this slot-pair, with probability
at least $\frac14 p^2 (1 - p) \geq \frac18 p^2$. Note that in slot-pairs
where no such $v$ exists, $u$ does not form a link. Thus, the degree of a node is upper bounded by the number of slot pairs where such a $v$ exists, and $u$ 
remains active after wards. The probability of there being $d$ such slot pairs is 
at most $(1 - \frac18 p^2)^d \leq e^{\frac{- p^2 d}{8}}$. 
%(\frac1{e})^{\frac{p^2 d}{8}}$.

Setting $d = \frac1{p^2} 80  \log n$ gives us the 
second part of the lemma.
\end{proof}

\section{Sparsity and Power Control}
\label{sec:sparsitypc1}

In this section, we show that the link set $\calT$ produced by the
algorithm of Sec.~\ref{sec:main}
can actually be scheduled in considerably fewer slots (in terms
of dependence on $\Delta$) using mean power, thus proving Thm.~\ref{thm:2}. This leads to an
algorithm to reschedule the same links with this improved power
assignment. The main idea is to show that the produced link set has
certain geometric properties that allows such improved scheduling.

%% We need the following definition.

\begin{defn}
A set $L$ of links is \textbf{$\psi$-sparse} if, for every closed ball $B$
in the plane,
   \[ B \cap L(8 \cdot rad(B)) \leq \psi\ , \]
where $rad(B)$ is
the radius of $B$, $L(d)$ is the set of links in $L$ of length at
least $d$, and $B \cap Q$ denotes the links in 
a set $Q$ with at least one endpoint in ball $B$.
\end{defn}

It was shown in \cite{SODA12} that the sparsity property (not
explicitly defined there) is connected to a property named
\emph{amenability} in \cite{SODA12}, which via an algorithm in
\cite{KesselheimSoda11} and results in \cite{SODA11} imply the
following:

\begin{theorem}[\cite{SODA12}]
Let $L$ be a $\psi$-sparse link set, for some $\psi \geq 1$. 
Then any $L' \subseteq L$ contains
a feasible subset of size $\Omega\left(\frac{|L'|}{\psi}\right)$. The set $L$ can be scheduled
in $O(\psi \log n)$ slots. Furthermore, any $L' \subseteq L$ contains
a subset of size $\Omega\left(\frac{|L'|}{\psi \cdot \Upsilon}\right)$ that is feasible under
mean power assignment.
% where $\Upsilon = \log \log \Delta + \log n$. 
The set $L$ can be scheduled in 
$O(\psi \cdot \Upsilon \cdot \log n)$ slots using mean power.
\label{sparsecapacity}
\end{theorem}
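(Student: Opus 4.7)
The plan is to reduce Theorem~\ref{sparsecapacity} to a single geometric fact --- that $\psi$-sparsity implies a bounded \emph{average} in-plus-out affectance in any subset of $L$ --- and then pipe the resulting feasible-subset bound through a standard iterative-peeling argument to obtain the schedule lengths. The mean-power statements are then obtained from the arbitrary-power ones by paying the definitional factor $\Upsilon$ once.

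The geometric core is the claim that for a $\psi$-sparse set $L$ and any $L' \subseteq L$,
\[
\sum_{\ell \in L'} \bigl(a^\calM_{L'}(\ell) + a^\calM_\ell(L')\bigr) \;=\; O(\psi \cdot \Upsilon) \cdot |L'|,
\]
with an analogous $O(\psi) \cdot |L'|$ bound under the best power assignment for each link. To prove this, I would fix $\ell = (u, v) \in L'$ of length $d_\ell$ and bucket the other links both by dyadic length class $I_k = [2^k d_\ell, 2^{k+1} d_\ell)$ (for integer $k$) and, within each class, by an annulus $A_t^k$ around $v$ of width $\Theta(2^k d_\ell)$. For $k \ge 0$, applying the sparsity definition to a ball of radius $\Theta(t \cdot 2^k d_\ell)$ around $v$ --- whose radius times $8$ is calibrated to the minimum length in $I_k$ --- bounds $|A_t^k| \le O(\psi)$. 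For $k < 0$ (interferers shorter than $\ell$), a volume/packing argument using the unit minimum inter-node distance bounds $|A_t^k|$ by $O(t)$. Each link in $A_t^k$ then contributes affectance at most $O((1/t)^\alpha)$ under a power assignment calibrated to its own length, with an extra $\Upsilon$ factor absorbed when projecting onto mean power. Summing the contributions over $t \ge 1$ and over $k$ gives a geometrically convergent series (using $\alpha > 2$), yielding the $O(\psi)$ (resp.\ $O(\psi\Upsilon)$) bound per link; the out-affectance $a_\ell(L')$ is controlled symmetrically by balls around $u$.

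Given the average affectance bound, the feasible-subset claim follows by standard subsampling: independently retain each link of $L'$ with probability $p = \Theta(1/\psi)$, so the expected affectance on any retained link is at most $1/2$; Markov then gives a feasible subset of expected size $\Omega(|L'|/\psi)$ (derandomizable by conditional expectations), and $\Omega(|L'|/(\psi\Upsilon))$ under mean power. Alternatively, Kesselheim's length-ordered greedy algorithm from \cite{KesselheimSoda11} realizes the arbitrary-power bound constructively when fed a sparse set. For scheduling, iterate the extraction: if every residual link set of size $m$ yields a feasible subset of size $\Omega(m/k)$ with $k \in \{\psi, \psi\Upsilon\}$, then after $t = O(k \log n)$ rounds of peeling fewer than $(1 - 1/(Ck))^t |L| < 1$ links remain, establishing the claimed $O(\psi \log n)$ and $O(\psi \Upsilon \log n)$ schedule lengths.

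The main obstacle is the geometric counting step. Two details require real care. First, for length classes $I_k$ with $k < 0$ the sparsity definition is silent (it only bounds \emph{long} links near a small ball), so short interferers must be handled separately via the minimum inter-node distance and a packing argument in each annulus. Second, the annular decomposition must line up the factor of $8$ in the sparsity definition with the dyadic length classes and annulus radii so that each $(k, t)$ bucket is charged exactly $O(\psi)$ links --- this is what makes the geometric series converge at the right constant. Once these alignments are set, the double sum over $k$ and $t$ is routine and the remainder of the theorem is bookkeeping.
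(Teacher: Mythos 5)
First, note that the paper does not prove Theorem~\ref{sparsecapacity}: it is imported from \cite{SODA12}, and Appendix~\ref{sec:amenabilityprimer} only sketches the chain sparsity $\Rightarrow$ amenability (bounded $\sum_{\ell'\in L} f_\ell(\ell')$ with $f_\ell(\ell') = a^{\calU}_{\ell'}(\ell) + a^{\calL}_{\ell}(\ell')$) $\Rightarrow$ large feasible subset via Kesselheim's length-ordered greedy, with the mean-power statement obtained by paying $\Upsilon$ via \cite{us:esa09full,SODA11}. Your reconstruction follows essentially this route (geometric counting $\Rightarrow$ bounded affectance sums $\Rightarrow$ extraction $\Rightarrow$ peeling), and the peeling step and the $\Upsilon$ bookkeeping are fine. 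One structural difference: amenability is a \emph{per-link} bound in the mixed uniform/linear form, which is what Kesselheim's constant-factor guarantee actually consumes; your average mean-power affectance bound supports only the subsampling route, and even there the step ``expected affectance on any retained link is at most $1/2$'' needs a per-link bound --- with only an average you must first restrict to the half of $L'$ whose affectance is at most twice the average. That is a fixable bookkeeping omission.

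The genuine error is in the annulus count. You claim $|A_t^k| = O(\psi)$ by ``applying the sparsity definition to a ball of radius $\Theta(t\cdot 2^k d_\ell)$ around $v$, whose radius times $8$ is calibrated to the minimum length in $I_k$.'' Both halves of that sentence cannot hold simultaneously once $t > 1$: a ball $B$ of radius $\Theta(t\cdot 2^k d_\ell)$ only controls, via $\psi$-sparsity, the links of length at least $8\cdot rad(B) = \Theta(t\cdot 2^k d_\ell)$, which for large $t$ excludes the class-$I_k$ links of length $\approx 2^k d_\ell$ you are trying to count. The correct move is to cover the annulus $A_t^k$ (radius $\approx t\,2^k d_\ell$, width $\approx 2^k d_\ell$) by $O(t)$ balls of radius $2^k d_\ell/8$, each contributing at most $\psi$ links of length $\ge 2^k d_\ell$, giving $|A_t^k| = O(\psi t)$. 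The double sum then reads $\sum_{t\ge 1}\psi t\cdot O(t^{-\alpha})$, which still converges for $\alpha>2$, so the theorem survives --- but the ``$O(\psi)$ links per $(k,t)$ bucket'' calibration you single out as the crux is exactly the step that fails as written. The treatment of short interferers ($k<0$) via the packing argument is correct and is indeed needed, since sparsity says nothing about them.
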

We provide a short overview of these ideas for reference in Appendix \ref{sec:amenabilityprimer}.

\noindent We now claim a sparsity result for the network $\calT$ formed by the algorithm.
\begin{lemma}
If $D$ is a disc of radius $\rho$ in the plane, then
the number of links in $\calT$ longer than $8\rho$ that have at least one endpoint in $D$ is $O(\log n)$, w.h.p.
\label{pointsinball}
\end{lemma}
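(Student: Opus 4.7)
The plan is to combine Lemma~\ref{distancelevels2} with Theorem~\ref{seconddegbound} via a purely geometric observation about when long links can be formed. Any link in $\calT$ of length greater than $8\rho$ must have been produced in some round $r$ whose length range $[2^{r-1},2^r)$ reaches above $8\rho$; in particular $2^r > 8\rho$, and hence $2^{r-1} > 4\rho$. Lemma~\ref{distancelevels2} then guarantees that at the start of such a round the pairwise distance among active nodes is strictly greater than $4\rho$, which is strictly greater than the diameter $2\rho$ of $D$. So $D$ contains at most one active node at the start of every relevant round.

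Next I would argue that this ``unique in-$D$ active node'' is actually the \emph{same} node $u^*$ for every round $r \ge r_0 := \min\{r : 2^r > 8\rho\}$. Monotonicity of the active set does the work: nodes only transition from active to inactive, so any node active in $D$ at the start of round $r'>r_0$ was already active in $D$ at the start of round $r_0$, and uniqueness at round $r_0$ forces equality. Since both endpoints of a successfully formed link are active during the corresponding slot-pair, and since a link of length greater than $8\rho$ cannot possibly have both its endpoints in $D$ (diameter only $2\rho$), the in-$D$ endpoint of every such link must be $u^*$. Thus all links we are counting are incident to this single node $u^*$.

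Finally, I would conclude by invoking Theorem~\ref{seconddegbound}, which gives $|L_{u^*}| = O(\log n)$ w.h.p., so at most $O(\log n)$ links of length greater than $8\rho$ have an endpoint in $D$. The one subtlety I expect to be the main obstacle is the ``same node across rounds'' reduction: one must check that links formed in early rounds $r<r_0$ cannot contribute, because their lengths satisfy $\text{length} < 2^r \le 2^{r_0-1} \le 8\rho$ by the choice of $r_0$, and that no new node can appear in $D$ as active in a later round, which is where monotonicity of the active set is essential. Once this is in place the remainder is immediate.
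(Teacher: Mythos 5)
Your proposal is correct and follows essentially the same route as the paper: both use Lemma~\ref{distancelevels2} to show that at most one node in $D$ can be an endpoint of a link longer than $8\rho$ (the paper does this by contradiction with two such nodes at distance at most $2\rho$; you do it directly via monotonicity of the active set), and both then conclude with the $O(\log n)$ degree bound of Theorem~\ref{seconddegbound}. No gaps.
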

\begin{proof}
Let $L = L(8 \cdot \rho) \cap D$.
We first claim that at most one node inside $D$ is incident to a link in $L$.
%the number of unique nodes in $D$ that are incident to some link in $L$ is at most one.
For contradiction, assume that there are two such nodes $u$ and $v$. Now, by the description of the
algorithm, links of length $8 \rho$ or higher can only be formed in rounds $\log \rho + 4$ or higher. Thus, both $u$ and
$v$ were active during round $\log \rho + 4$.
However, $d(u, v) \leq 2 \rho$ and thus by Lemma \ref{distancelevels2}, at the end of round $\log \rho + 2$,
at most one of them could remain active. This is a contradiction.
The proof of the Lemma is now complete by Thm.~\ref{seconddegbound}.
\end{proof}

By union bounding over all $\rho$ and all balls (by careful selection, there
are only polynomially many of them that are relevant), this implies:

\begin{theorem}
The set $\calT$ of links produced by the algorithm is $O(\log n)$-sparse.
\label{lognsparsity}
\end{theorem}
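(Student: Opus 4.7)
The plan is to upgrade Lemma~\ref{pointsinball}, which is a per-disc w.h.p.\ statement, into a statement that holds simultaneously for every closed ball, which is precisely what the definition of $O(\log n)$-sparsity demands.

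First, I would observe that, inspecting the proof of Lemma~\ref{pointsinball}, the only probabilistic ingredients are (i)~Lemma~\ref{distancelevels2}, whose w.h.p.\ guarantee is already quantified over all rounds, and (ii)~Theorem~\ref{seconddegbound}, whose w.h.p.\ guarantee is quantified over all nodes. Neither needs to be re-applied per disc. So the plan is to condition on the intersection of these two w.h.p.\ events, which is itself a w.h.p.\ event.

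Under this conditioning, the geometric core of Lemma~\ref{pointsinball}---that at most one node of a disc $D$ of radius $\rho$ is incident to a link of length $\ge 8\rho$---becomes a deterministic statement valid for every disc $D$ of every radius $\rho$: any link of length $\ge 8\rho$ is formed in a round $r \ge \lceil\log\rho\rceil + 4$, and at the start of that round every still-active pair of nodes has distance at least $2^{r-1} \ge 8\rho$, which strictly exceeds the diameter $2\rho$ of $D$. The degree bound $O(\log n)$ is likewise deterministic on the conditioned event, so $|D\cap L(8\rho)|$ is at most the degree of the unique node of $D$ that is incident to links in $L(8\rho)$ (if any), hence $O(\log n)$. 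Taking this over all $D$ gives the sparsity conclusion.

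If one prefers the explicit union-bound viewpoint alluded to in the ``polynomially many relevant balls'' remark, I would instead apply Lemma~\ref{pointsinball} only to the $O(n\log\Delta)$ canonical discs $B(u, 2^i)$ and then reduce an arbitrary disc $B$ of radius $\rho$ to the smallest such canonical disc that contains it (the case where $B$ contains no node being trivial). A minor bookkeeping point is that this reduction costs up to a factor of two in radius, so one would either refine the dyadic grid or invoke a variant of Lemma~\ref{pointsinball} with a slightly smaller constant in order to preserve the cutoff $8\rho$. The deterministic route sketched above sidesteps this entirely and is what I would write up; no substantial obstacle is anticipated, as the proof is essentially a reorganization separating the deterministic geometric argument from the probabilistic ingredients already bounded by the cited results.
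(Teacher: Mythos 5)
Your proposal is correct, and your primary (``deterministic'') route is a genuinely different way of globalizing Lemma~\ref{pointsinball} than the one the paper uses. The paper's proof of Theorem~\ref{lognsparsity} is a one-line union bound ``over all $\rho$ and all balls,'' relying on there being only polynomially many relevant canonical balls --- essentially your second, discretization-based route, and you are right that it carries a small bookkeeping cost (the factor-of-two loss in radius when rounding to a dyadic grid, plus the need to argue that only polynomially many radii are non-vacuous, since $\log\Delta$ itself need not be polynomial in $n$). Your main route sidesteps all of this by noticing that the only stochastic inputs to Lemma~\ref{pointsinball} are two \emph{already global} w.h.p.\ events --- Lemma~\ref{distancelevels2} (quantified over all rounds) and Theorem~\ref{seconddegbound} (quantified over all nodes) --- and that, conditioned on their intersection, the per-disc claim is a purely deterministic geometric fact holding simultaneously for every closed ball: any two nodes of $D$ incident to links of length $\ge 8\rho$ would both have to be active at the start of a round $r\ge\lceil\log\rho\rceil+4$, contradicting the $2^{r-1}\ge 8\rho > 2\rho$ separation of active nodes, and the surviving single node contributes at most its degree, $O(\log n)$, to $|D\cap L(8\rho)|$. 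What your approach buys is a proof with no union bound over balls at all and no loss in the constant $8$; what the paper's approach buys is nothing extra here --- it is simply terser. Both rest on the same geometric core, so this is a reorganization rather than a new idea, but it is the cleaner write-up.
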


We now propose the following extension of the algorithm to 
schedule the links using significantly fewer slots. 
\begin{quote}
The sender of each link $\ell$ in $\calT$ sets its power to
mean power, $\ell^{\alpha/2}$. The links then use the distributed algorithm from
\cite{KV10} to compute a schedule of the links using this power
assignment. 
\end{quote}

We can now prove Thm.~\ref{thm:2}.
\begin{proof}
Thm.~\ref{sparsecapacity} and Thm.~\ref{lognsparsity} imply that $\calT$ 
can be scheduled in $O(\Upsilon \cdot \log^2 n)$ slots using mean power.
The distributed scheduling algorithm of \cite{KV10} produces a
$O(\log n)$-approximation \cite{icalp11}, giving the Theorem. (See Appendix \ref{sec:dualclarification} for a technical note on the approximation factor in \cite{icalp11}).
\end{proof}
The resulting schedule, however, does not necessarily satisfy the ordering
property of bi-trees.

\section{Matching Centralized Bounds}
\label{sec:optimal}
In this section, we prove Thm.~\ref{thm:3}. The difference with Sec.~\ref{sec:sparsitypc1} are threefold. 
First, we achieve more efficient final schedules. Second, unlike Sec.~\ref{sec:sparsitypc1}, we produce bi-trees. The third is a difference in approach. 
While the algorithm in Sec.~\ref{sec:sparsitypc1}  merely rescheduled the links in the original tree, in this section, we shall actually build a \emph{new} tree with superior properties, but will do so by using the original tree.

We use \alg{Init} to refer to the algorithm from Sec.~\ref{sec:main} that constructs the initial bi-tree. For any link set $L$ which is a subset of a directed rooted tree, we call a node $u$ a ``top level node'' with respect to $L$ if no link of form $(v, w)$ is in $L$ (i.e., the link between $v$ and its parent in the rooted tree, if such a link exists, is not present in $L$).

In what follows, we focus on forming the aggregation tree part for simplicity (constructing the dissemination tree portion of the bi-tree is essentially identical). 
The algorithmic framework is as follows.
\begin{algorithm}                      % enter the algorithm environment
\caption{TreeViaCapacity}          % give the algorithm a caption
\label{alg1}                           % and a label for \ref{} commands later in the document
\begin{algorithmic}[1]                    % enter the algorithmic environment
      \STATE Set $i = 0$ and $P_i = P$ (the original input set). 
      \FOR{$i = 0, 1, 2 \ldots$ \textbf{until} $|P_i| = 1$}
        \STATE Construct (aggregation) tree $\calT$ on $P_i$ using \alg{Init}.
        \STATE Find a feasible subset $\calT' \subset \calT$
        \STATE Let $P_{i+1}$ be the set of top level nodes w.r.t.\ $\calT'$. 
      \ENDFOR
 \end{algorithmic}
 \label{alg1fig}
 \end{algorithm}

If $\calT'$ is large, then this process ends quickly.
\begin{theorem}
Assume that in each iteration, $\Ex(|\calT'|) = \delta |\calT|$ for
some $\delta > 0$. Then, the process ends after $O(\frac1{\delta}
\log n)$ iterations and the links produced form an aggregation tree
connecting the nodes in $O(\frac1{\delta} \log n)$ slots, w.h.p. 
\label{thm:logn}
\end{theorem}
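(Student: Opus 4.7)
The plan has two parts: bound the number of iterations via a one-line expectation argument, and then check that the concatenated per-iteration feasible subsets assemble into a valid aggregation tree whose natural ``iteration equals slot'' schedule has length matching the iteration count.

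For the iteration bound, the key identity is $|P_{i+1}| = |P_i| - |\calT'|$. This holds because $\calT$ is a spanning aggregation tree on $P_i$ in which every non-root node has a unique outgoing edge, and a node $u$ fails to be top-level with respect to $\calT' \subseteq \calT$ exactly when its outgoing edge lies in $\calT'$. Combined with the hypothesis $\Ex(|\calT'| \mid P_i) \ge \delta(|P_i|-1)$, this yields the linear recursion $\Ex(|P_{i+1}| - 1 \mid P_i) \le (1-\delta)(|P_i|-1)$ and hence $\Ex(|P_k|-1) \le (1-\delta)^k(n-1) \le e^{-\delta k}n$. Taking $k = c(\log n)/\delta$ for sufficiently large $c$ makes this expectation polynomially small; since $|P_k|-1$ is a nonnegative integer, Markov's inequality immediately yields $\Pro(|P_k| \ge 2) \le 1/n^{\Omega(1)}$, giving the $O((\log n)/\delta)$ iteration bound w.h.p.

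For the tree-and-schedule part, write $\calT'_i$ for the feasible subset chosen in iteration $i$ and let $\calT_{\mathrm{final}} = \bigcup_i \calT'_i$. Each non-root node of $P$ becomes non-top-level in exactly one iteration and contributes a single outgoing edge at that moment, so $\calT_{\mathrm{final}}$ is a directed spanning tree rooted at the last surviving node. Assign slot $i$ to every link of $\calT'_i$; feasibility covers the SINR side within each slot. For the aggregation ordering I would argue by induction along any descent chain $w_0 \to w_1 \to \cdots \to w_k = u$ in $\calT_{\mathrm{final}}$ that the iteration indices of its edges are \emph{strictly} increasing, so every descendant edge of $u$ sits in a strictly earlier slot than $u$'s own outgoing edge. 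Strictness is the crux: it follows from the half-duplex nature of SINR feasibility --- the model defines reception at a ``non-transmitting node'' $v$, so $v$ cannot simultaneously be a sender in the same feasible set --- hence no intermediate $w_j$ can be both the receiver of $(w_{j-1}, w_j)$ and the sender of $(w_j, w_{j+1})$ inside the same $\calT'_i$, forcing $w_j$ to survive strictly past iteration $i_{j-1}$.

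The only delicate step is this last half-duplex argument; the expectation recursion and the Markov-on-$(|P_k|-1)$ upgrade are routine. What merits care is verifying that the natural one-slot-per-iteration schedule honors the converge-cast ordering rather than only SINR feasibility, and the property that unlocks this is precisely the half-duplex constraint built into the SINR definition.
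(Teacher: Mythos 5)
Your proof is correct and follows essentially the same route as the paper's: an expected geometric decrease of $|P_i|$ followed by Markov's inequality for the iteration count, and the observation that a node cannot be both sender and receiver within one feasible set to get the converge-cast ordering. Your recursion on $|P_i|-1$ is in fact slightly cleaner than the paper's (which conditions on $|P_i|\ge 2$), and your strict-increase induction along descent chains just spells out what the paper asserts as ``easy to see.''
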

\begin{proof}
First we show that:
\begin{claim}
 $\Ex(|P_{i+1}|) \leq (1 - \frac12 \delta) |P_{i}|$, for any $P_i$ such that $|P_i| \geq 2$.
 \end{claim}
 \begin{proof}
 Suffices to pro
Recall that $|\calT'| \geq \delta |\calT| = \delta (|P_i| -1)$.
Consider any link $(u, v) \in \calT'$. Clearly, this link rules out
$u$ as a top level node. Also, since $\calT$ is an aggregation tree,
there can be at most one outgoing link from each node $u$. Thus, 
$\Ex(|P_{i+1}|) \leq |P_i| -\Ex(|\calT'|) \leq |P_i| - \delta (|P_i| -1) \leq (1 - \frac12 \delta) |P_i|$ (for $|P_i| \geq 2$). 
 \end{proof}
 
This can be used to show that the process ends in $O(\frac1{\delta} \log
n)$ steps, w.h.p.

\begin{claim}
 $\Pro(|P_{t}| > 1) \leq \frac1{n^4}$ for $t = 10 \frac1{\delta} \log n$.
 \end{claim}
 \begin{proof}
 Since $P_i$ is non-increasing in $i$, for contradiction, condition on all $P_i \geq 2$ for $i \leq t$. Then we can apply the above Lemma to show that 
 
 \begin{align*}
 \Ex(|P_t|) \leq \left(1 - \frac12 \delta\right)^{10 \frac1{\delta} \log n} \frac1{n} \leq \frac1{n^4} \ ,
 \end{align*}
from which the claim follows by Markov's inequality.
 \end{proof}

By the definition of top level nodes, nodes not in $P_{i+1}$ are connected to some node in $P_i$ by a link. Thus, the final structure is clearly a converge-cast tree. The ordering on schedules is also guaranteed by the way the algorithm proceeds (it is easy to see that nodes can be involved in at most one link in a feasible set, thus the ordering is not violated within  $\calT'$).
 
Finally, since each iteration uses a single slot, the bound on
iterations immediately implies the bound on the number of slots in the
schedule. The theorem follows.
\end{proof}

To implement the above scheme, we need to show that $\calT'$ can always be found for a large enough $\delta$ to claim the results in Thm.~\ref{thm:3}.

We do this in two steps: in the first step a $O(1)$-sparse subset $\calT(M) \subseteq \calT$ is chosen, and in the second step a subset of $\calT(M)$ is chosen as $\calT'$. The first step is identical for mean power and arbitrary power case. The set $\calT(M)$ is defined in the following result, whose proof is in Appendix \ref{sec:missing}.
\begin{theorem}
Let $M$ be the set of nodes of degree at most $\rho = \frac{160}{p^2}$
in $\calT$, and let $\calT(M)$ be the links in $\calT$ induced by $M$.
Then, $\calT(M)$ is $O(1)$-sparse and $\Ex(|\calT(M)|) = \Omega(|\calT|)$.
\label{constantsparsity}
\end{theorem}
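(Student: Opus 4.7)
The plan is to prove the two claims of the theorem separately, both as essentially direct consequences of Theorem~\ref{seconddegbound} (the tail bound on node degrees) together with the structural observation used inside the proof of Lemma~\ref{pointsinball}.

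For the expected-size bound $\Ex(|\calT(M)|) = \Omega(|\calT|)$, I would argue link by link. Fix any link $\ell = (u,v) \in \calT$. Substituting $d = \rho = 160/p^2$ in Theorem~\ref{seconddegbound} gives $\Pro(\deg(u) > \rho) \le e^{-p^2\rho/8} = e^{-20}$, and similarly for $v$. A union bound over the two endpoints shows that the probability that both $u$ and $v$ lie in $M$ is at least $1 - 2e^{-20}$. Since $|\calT|$ is deterministic (the algorithm always produces a spanning structure on $n$ nodes), linearity of expectation yields $\Ex(|\calT(M)|) \ge (1-2e^{-20})|\calT| = \Omega(|\calT|)$.

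For the $O(1)$-sparsity claim, fix a closed ball $B$ of radius $r$ and consider any link $\ell \in \calT(M)$ of length at least $8r$ with an endpoint in $B$. Because $\ell$ is longer than the diameter $2r$ of $B$, exactly one endpoint of $\ell$ lies in $B$. The geometric observation inside the proof of Lemma~\ref{pointsinball} (an application of Lemma~\ref{distancelevels2} to the earliest round at which a link of length $\ge 8r$ can be formed) shows that, w.h.p., at most one node of $\calT$ inside $B$ can be incident to any link of length at least $8r$; call that node $u$. Every link of $\calT$ satisfying our conditions is therefore incident to $u$. If $u \notin M$ the count is $0$; otherwise $u \in M$ means the degree of $u$ in $\calT$ is at most $\rho$, so at most $\rho = O(1)$ such links exist in $\calT$, and hence in $\calT(M)$. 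A union bound over the polynomially many relevant balls (as already exploited in Theorem~\ref{lognsparsity}) then gives $O(1)$-sparsity, w.h.p.

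The conceptual point, and the reason the theorem holds in the first place, is that the $O(\log n)$ factor in Lemma~\ref{pointsinball} came entirely from the $O(\log n)$ bound on the maximum degree, so restricting to nodes of constant degree replaces $O(\log n)$-sparsity by $O(1)$-sparsity at the cost of only a constant fraction of links, since the degree tail is exponentially small at $\rho = \Theta(1/p^2)$. The only technical delicacy I anticipate is that $M$ is itself a random subset determined by $\calT$, so both the degree bound and the sparsity bound live on the same probability space; but since both are high-probability structural facts about the random tree, they can be applied on the same sample path with no circularity and no additional union-bounding subtlety beyond what is already used in Theorem~\ref{lognsparsity}.
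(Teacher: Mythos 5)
Your argument for the $O(1)$-sparsity half is correct and is essentially the paper's own: the ball argument inside Lemma~\ref{pointsinball} localizes all links of length at least $8r$ touching $B$ to a single node, and membership in $M$ caps that node's degree at $\rho=O(1)$ in place of the $O(\log n)$ bound that Theorem~\ref{seconddegbound} supplies in the unrestricted case.

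The expectation bound, however, has a genuine gap. You fix a link $\ell=(u,v)\in\calT$ and apply the marginal tail bound $\Pro(\deg(u)>\rho)\le e^{-20}$ to its endpoints. But $\calT$ is itself random, so what your linearity-of-expectation step actually requires is the \emph{conditional} bound $\Pro(u\notin M \text{ or } v\notin M \mid \ell\in\calT)\le 2e^{-20}$, and Theorem~\ref{seconddegbound} says nothing about this conditional probability. The two differ by a size-biasing effect: a node of degree $d$ is an endpoint of $d$ links of $\calT$, so the links of $\calT$ are disproportionately incident to high-degree nodes. Concretely, if some node had degree $1$ except on an event of probability $e^{-20}$ on which it had degree $n/2$, the marginal bound would hold while the conditional probability of a bad endpoint, given that a particular one of those $n/2$ links was formed, could be $1$. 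Writing $\Ex(|\calT(M)|)=\sum_{\ell}\Pro(\ell\in\calT \text{ and both endpoints in } M)$ over all \emph{potential} links makes clear that no union bound using only the marginals of $\deg(u)$ and $\deg(v)$ closes this.

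The repair is what the paper does: instead of a per-link probability, bound the expected total degree carried by nodes outside $M$, namely $\Ex\bigl(\sum_{u:\,\deg(u)>\rho}\deg(u)\bigr)$, by cutting the degree tail into dyadic classes $[\rho 2^t,\rho 2^{t+1})$ and summing $\rho\,2^{t+1}e^{-p^2\rho 2^t/8}$ (Lemma~\ref{mprimedegbound}); this is where the specific choice $\rho=160/p^2$ earns its keep, since the exponential tail must beat the weight $\rho 2^{t+1}$ rather than merely be a small constant. Combined with the deterministic count $|P\setminus M|\le 2n/\rho$ (from $\sum_u\deg(u)=2(n-1)$) and the connectivity of $\calT$, which gives $|\edges(M,P)|\ge |M|-1$, this yields $\Ex(|\edges(M,M)|)=\Omega(n)$. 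Your intuition that the exponentially small tail at $\rho=\Theta(1/p^2)$ is what makes the theorem true is right, but the tail has to be integrated against the degree, not just against the indicator of being high-degree.
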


% Note that by Thm.~\ref{sparsecapacity}, this implies that there exists
% $\calT'$ such that $|\calT'| = \Omega(|\calT(M)|) = \Omega(|\calT|)$ that is
% feasible (i.e. there exists some power assignment that makes it feasible). Thus, when arbitrary powers are concerned, $\delta$ in the above scheme can be set to a constant and connectivity is achievable in $O(\log n)$ steps. The same idea applies to mean power.
% Again by Thm.~\ref{sparsecapacity}, there
% exists $\calT'$ feasible under mean power such that $|\calT'| = \Omega(\frac1{\Upsilon} |\calT(M)|) = \Omega(\frac1{\Upsilon}|\calT|)$. Following the same argument as before, one can schedule the whole instance then
% in $O(\Upsilon \cdot \log n)$ slots, a polylogarithmic improvement over Thm.~\ref{thm:2}.

% Now, though this process is possible in principle (i.e., using a centralized algorithm), it is not obvious that it
% can be done reasonably in a distributed fashion. We show now how to achieve this.

To actually compute $\calT(M)$ in a distributed fashion, 
 note that nodes can easily decide if they are in $M$ (by counting the number of links adjacent to them). One sweep through the  existing network $\calT$ is enough for each node to detect which of their links (if any) are in $\calT(M)$. %We need additionally that:
% \begin{itemize}
% \item Nodes with links in $\calT(M)$ be able to identify the set $\calT'$.
% \item The power assignment be computable in a distributed manner. (This is only an issue for the arbitrary assignment case).
% \end{itemize}

Selecting $\calT'$ is also reasonably easy for mean power, but 
more involved for arbitrary powers. 
The following two subsections deal with these cases separately.
Note that we keep the original network around at all times,
which is useful for controlling the construction of the new one.
Running these networks in parallel can be achieved with
simple time-division multiplexing.

\subsection{Finding $\calT'$ with mean power}

Assume that $\calT(M)$ is known. It can be shown that the \emph{average affectance} in the linkset $\calT(M)$ (under mean power) is small, or $O(\Upsilon)$ (proof in Appendix \ref{sec:missing}).

\begin{lemma}
Affectance within $\calT(M)$ under mean power satisfies
$a^{\calM}_{\calT(M)}(\calT(M)) = \gamma_1 \Upsilon |\calT(M)|$, for some constant $\gamma_1$.
\label{averageaffectancemean}
\end{lemma}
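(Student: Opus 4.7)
The plan is to combine the $O(1)$-sparsity of $\calT(M)$ given by Thm.~\ref{constantsparsity} with the structural properties of mean power. Concretely, I would fix a link $\ell = (u, v) \in \calT(M)$ of length $d_\ell$, bound the incoming affectance $a^{\calM}_{\calT(M)}(\ell) = O(\Upsilon)$, and then sum over $\ell \in \calT(M)$ to obtain the stated inequality.

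For the per-link bound, I would partition the other links of $\calT(M)$ into length classes relative to $d_\ell$: $L_j = \{\ell' = (u', v') : 2^j d_\ell \le d_{\ell'} < 2^{j+1} d_\ell\}$. Under mean power the affectance from such an $\ell'$ is $a^{\calM}_{\ell'}(\ell) = O((d_{\ell'}/d_\ell)^{\alpha/2} \cdot (d_\ell/d(u', v))^{\alpha})$, capped at $1 + \epsilon$. Within each class I would decompose the plane around $v$ into geometric annuli of doubling radii; $O(1)$-sparsity bounds the number of endpoints of $L_j$ inside each annulus by $O(1)$, so the geometric sum of affectance contributions across annuli gives $O(1)$ per class. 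For shorter-link classes ($j < 0$) the mean-power factor $(d_{\ell'}/d_\ell)^{\alpha/2}$ decays geometrically in $|j|$, so the total contribution of all $j < 0$ classes is $O(1)$. For comparable-length classes ($|j|$ constant) the same annular calculation still gives $O(1)$. For longer-link classes ($j > 0$) the $1+\epsilon$ cap keeps each class at $O(1)$, but the mean-power factor is growing, so one must carefully restrict attention to the classes that can actually contribute.

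The main obstacle is showing that only $O(\Upsilon) = O(\log\log\Delta + \log n)$ length classes contribute non-trivially --- this is precisely the known price of switching from arbitrary to mean power, and is exactly where the $\Upsilon$ factor enters. Rather than redo the length-class accounting by hand, I would invoke the quantitative sparse-affectance bound established in \cite{us:esa09full,SODA11} that underlies Thm.~\ref{sparsecapacity}: for any $\psi$-sparse linkset $L$ under mean power, $a^{\calM}_L(L) = O(\psi \cdot \Upsilon \cdot |L|)$. Applying this with $\psi = O(1)$ from Thm.~\ref{constantsparsity} yields the lemma directly, with $\gamma_1$ depending only on $\alpha, \beta$ and the sparsity constant.
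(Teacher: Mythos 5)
There is a genuine gap, and it sits exactly at the step you flag as the ``main obstacle.'' Your plan is to show, for each fixed $\ell$, that the \emph{incoming} affectance $a^{\calM}_{\calT(M)}(\ell)$ is $O(\Upsilon)$ and then sum. But the claim that the shorter-link classes ($j<0$) contribute $O(1)$ because the mean-power factor $(d_{\ell'}/d_\ell)^{\alpha/2}$ decays geometrically is false: sparsity only limits the number of class-$j$ links per ball of radius $\Theta(2^j d_\ell)$, so an annulus at distance $\delta$ from $v$ may contain $\Theta((\delta/2^j d_\ell)^2)$ such links, and near the radius where the $(1+\epsilon)$ cap kicks in this crowd overwhelms the decaying power factor. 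A short calculation shows a single class $j<0$ can contribute $\Theta(2^{-j})$ to the incoming affectance of $\ell$ in a $O(1)$-sparse set. This is the well-known ``weak half'' of mean power; it is why only the \emph{average} affectance of a sparse set is $O(\Upsilon)$, not the per-link incoming affectance, and so the per-link-then-sum strategy cannot work as stated. The paper avoids this by reorganizing the double sum so that each link is charged only for its \emph{bidirectional} affectance with the links \emph{longer} than itself, $a_\ell(Q^\ell)+a_{Q^\ell}(\ell)$; that quantity really is $O(\Upsilon)$ per link (far length classes give $O(\log\log\Delta)$ via Lemma~4.4 of \cite{us:esa09full}, the $O(\log n)$ nearby classes give $O(1)$ each), and summing it over all links still covers every ordered pair exactly once.

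Your fallback of citing a black-box bound $a^{\calM}_L(L)=O(\psi\cdot\Upsilon\cdot|L|)$ for $\psi$-sparse $L$ ``underlying Thm.~\ref{sparsecapacity}'' is circular: that statement is precisely Lemma~\ref{averageaffectancemean} (up to the sparsity parameter), and Thm.~\ref{sparsecapacity} and its sources are phrased in terms of amenability (the function $f_\ell$, mixing uniform- and linear-power affectances) and feasible-subset sizes, not mean-power average affectance. The paper instead supplies the missing bridge explicitly: it shows (Lemma~\ref{sparsitytoindependence}) that an $O(1)$-sparse set can be partitioned into $O(1)$ many $C$-independent sets, and then applies the longer-links bound within each independent set. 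To repair your argument you would need either that partition-into-independent-sets step, or a correct direct treatment of the shorter-link classes via the longer-link reorganization.
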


Lemma \ref{averageaffectancemean} implies, after some basic manipulation, that
there exists $Q$ with $|Q| \geq \frac12 |\calT(M)|$, such that
$a^{\calM}_{\calT(M)}(\ell) \leq 2 \gamma_1 \Upsilon$ for all $\ell \in Q$.

The following \emph{sampling} mechanism produces a large feasible set 
in expectation (see \cite{FKRV09}):
Each link in $\calT(M)$ transmits with iid probability $\frac1{4
  \gamma_1 \Upsilon}$, with the successful links forming the set $\calT'$.
Since each transmitting link in $Q$ succeeds with probability $\geq \frac12$,
the expected size of $\calT'$ is at least $\frac1{2 \gamma_1 \Upsilon} |Q| = \Omega(\frac1{\Upsilon} |\calT(M)|)$. 
Combining this with Thm.~\ref{constantsparsity}, we get that
\begin{lemma}
$\Ex(|\calT'|) = \Omega(\frac1{\Upsilon} \Ex(|\calT(M)|)) = \Omega(\frac1{\Upsilon} |\calT|)$.
\end{lemma}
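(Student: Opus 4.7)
The plan is to assemble the stated bound from three ingredients already in hand: the $O(1)$-sparsity of $\calT(M)$ together with $\Ex(|\calT(M)|) = \Omega(|\calT|)$ (Thm.~\ref{constantsparsity}), the average-affectance bound $a^{\calM}_{\calT(M)}(\calT(M)) \leq \gamma_1 \Upsilon |\calT(M)|$ (Lemma~\ref{averageaffectancemean}), and the sampling procedure in which each link of $\calT(M)$ is chosen independently with probability $q = \frac{1}{4\gamma_1 \Upsilon}$. Conceptually this is the standard ``averaging + sampling'' argument used for capacity maximization in the SINR model, and both steps are short once the pieces are in place.

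First I would apply Markov's inequality to the per-link affectance. Since the total affectance summed over all links of $\calT(M)$ is at most $\gamma_1 \Upsilon |\calT(M)|$, fewer than half of the links can have $a^{\calM}_{\calT(M)}(\ell) > 2\gamma_1 \Upsilon$. Let $Q \subseteq \calT(M)$ be the set with $a^{\calM}_{\calT(M)}(\ell) \leq 2\gamma_1 \Upsilon$; then $|Q| \geq \tfrac{1}{2}|\calT(M)|$. This is exactly the set identified in the paragraph preceding the lemma.

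Next, fix a link $\ell \in Q$ and analyze its contribution to $\calT'$. Condition on $\ell$ being sampled, which happens with probability $q$. By linearity over the independent sampling of the remaining senders, the expected affectance on $\ell$ from them is at most $q \cdot a^{\calM}_{\calT(M)}(\ell) \leq q \cdot 2\gamma_1 \Upsilon = \tfrac{1}{2}$. Markov then gives that this affectance exceeds $1$ with probability at most $\tfrac{1}{2}$, so conditional on being sampled, $\ell$ satisfies its SINR constraint (and thus lies in $\calT'$) with probability at least $\tfrac{1}{2}$. Hence each link of $Q$ ends up in $\calT'$ with probability at least $q/2 = 1/(8\gamma_1 \Upsilon)$.

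Summing over $Q$ and chaining with the earlier bounds,
\[
\Ex(|\calT'|) \;\geq\; \frac{1}{8\gamma_1 \Upsilon}\,\Ex(|Q|) \;\geq\; \frac{1}{16\gamma_1 \Upsilon}\,\Ex(|\calT(M)|) \;=\; \Omega\!\left(\tfrac{1}{\Upsilon}|\calT|\right),
\]
the last step by Thm.~\ref{constantsparsity}. The only subtlety is that $\calT(M)$ and $Q$ are themselves random from the initial construction, but the sampling is an independent second-stage randomization, so the tower property justifies the chained expectations. I do not anticipate any real obstacle: all of the nontrivial work has been pushed into Lemma~\ref{averageaffectancemean} and Thm.~\ref{constantsparsity}, and what remains is the template averaging+sampling finish.
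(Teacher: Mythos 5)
Your proof is correct and follows essentially the same route the paper takes: Markov on the average-affectance bound of Lemma~\ref{averageaffectancemean} to extract the half-sized set $Q$, then the standard sampling argument (each link transmits with probability $\frac{1}{4\gamma_1\Upsilon}$ and, conditioned on transmitting, succeeds with probability at least $\tfrac12$ by a second application of Markov), finished off with Thm.~\ref{constantsparsity} and the tower property. You have in fact supplied the details the paper leaves implicit (the ``basic manipulation'' and the citation to the sampling template), and your constant $\frac{1}{8\gamma_1\Upsilon}$ per link of $Q$ is the careful version of the paper's slightly loose $\frac{1}{2\gamma_1\Upsilon}|Q|$; the asymptotic conclusion is the same.
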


Thus, Thm.~\ref{thm:logn} can be invoked with $\delta =
\Omega(\frac1{\Upsilon})$, to obtain the second half of Thm.~\ref{thm:3}:

\begin{theorem}
  There exists a distributed algorithm that forms and schedules a
  bi-tree in
  $O(\Upsilon \cdot \log n)$ slots using mean power. This algorithm completes in time $O(\Upsilon \log \Delta \cdot \log^2 n)$.
  \label{mpfinal}
\end{theorem}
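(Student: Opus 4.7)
The plan is to instantiate the \alg{TreeViaCapacity} framework (Algorithm~\ref{alg1}) with a mean-power capacity subroutine in each outer iteration, and to turn the resulting per-iteration expected-shrinkage bound into the claimed slot and running-time bounds via Thm.~\ref{thm:logn}.

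Concretely, at iteration $i$ I would have the surviving node set $P_i$ run \alg{Init}, which by Thm.~\ref{thm:1} produces an aggregation tree $\calT$ on $P_i$ in $O(\log \Delta \cdot \log n)$ slots; moreover every node locally knows the tree links incident to it. Each node then tests whether its degree in $\calT$ exceeds $\rho = 160/p^2$, and a single bit exchanged along each tree link lets both endpoints mark whether the link lies in $\calT(M)$. By Thm.~\ref{constantsparsity}, $\calT(M)$ is $O(1)$-sparse with $\Ex(|\calT(M)|) = \Omega(|\calT|)$. I would then extract $\calT'$ by the sampling procedure described in this section: in a single slot pair each link in $\calT(M)$ transmits with iid probability $1/(4\gamma_1 \Upsilon)$ under mean power, and a link joins $\calT'$ precisely when the sender's transmission is successfully acknowledged. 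Combining Lemma~\ref{averageaffectancemean} with the Markov step that extracts the large low-affectance subset $Q \subseteq \calT(M)$ yields $\Ex(|\calT'|) = \Omega(|\calT(M)|/\Upsilon) = \Omega(|\calT|/\Upsilon)$.

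Plugging $\delta = \Omega(1/\Upsilon)$ into Thm.~\ref{thm:logn} bounds the number of outer iterations by $O(\Upsilon \log n)$ w.h.p. Each iteration contributes exactly one feasible mean-power slot to the final schedule (the slot in which $\calT'$ was sampled), so the schedule has length $O(\Upsilon \log n)$. The total running time is the per-iteration cost $O(\log \Delta \log n)$ from \alg{Init} plus an $O(1)$ overhead for the degree exchange and sampling, multiplied by $O(\Upsilon \log n)$ iterations, giving $O(\Upsilon \log \Delta \cdot \log^2 n)$. The dissemination tree is constructed by the symmetric process run in parallel via time-division multiplexing, and the bi-tree ordering survives because within a single feasible $\calT'$ each node is incident to at most one link and every such link points from a still-surviving top-level node to its parent, so the concatenated schedule respects the leaf-to-root order automatically.

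The point requiring the most care, and where I would focus my verification, is that only an \emph{expectation} bound on $|\calT'|$ per iteration (rather than a per-iteration concentration bound) is needed. This is valid because Thm.~\ref{thm:logn} already converts the expected $(1 - \delta/2)$ shrinkage into a high-probability termination bound by telescoping expectations and applying Markov to the product $\Ex(|P_t|) \leq (1-\delta/2)^t\, n$, so no repetition of the sampling inside a single iteration, and hence no extra $\log n$ factor, is required. The secondary subtlety is ensuring the simultaneous construction of the dissemination tree does not disturb the bound; since both halves share the same link set and the same per-iteration structure, scheduling them on alternating slots doubles constants but preserves the stated asymptotics.
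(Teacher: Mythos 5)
Your proposal is correct and follows essentially the same route as the paper: instantiate \alg{TreeViaCapacity} by extracting the $O(1)$-sparse subset $\calT(M)$ (Thm.~\ref{constantsparsity}), sampling $\calT'$ via the affectance bound of Lemma~\ref{averageaffectancemean} to get $\Ex(|\calT'|) = \Omega(|\calT|/\Upsilon)$, and invoking Thm.~\ref{thm:logn} with $\delta = \Omega(1/\Upsilon)$ for both the $O(\Upsilon \log n)$ slot count and the $O(\Upsilon \log\Delta\cdot\log^2 n)$ running time. The side issues you flag (acknowledging sampled transmissions, running the dissemination tree in parallel, and needing only per-iteration expectation bounds) are exactly the ones the paper addresses in the same way.
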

\begin{proof}
The performance of the final solution follows from
Thm.~\ref{thm:logn}, as mentioned above. Let us the bound the total
running time. The algorithm \alg{Init} needs to be invoked $O(\Upsilon
\cdot \log n)$ times, for a total cost of $O(\Upsilon \cdot \log
\Delta \cdot \log^2 n )$. After forming $\calT$ with each such
invocation, identifying $\calT(M)$ costs $O(\log \Delta \log n)$ (the
cost of $\calT$). Computing $\calT'$ is cheap since the sampling is
done in parallel. 
One technical aspect to note is that while the nodes choose $\calT'$,
they nodes need to know if their transmission succeeded; this can be
done without substantial loss of performance using an extra acknowledgment
slot, as we have seen before. 
The runtime bound of the theorem then follows.
\end{proof}
This theorem completes the proof of the second half of Thm.~\ref{thm:3}.

\subsection{Finding $\calT'$ with arbitrary power}
%\footnote{Start here with an outline; mention the appendix, and that a sketch follows.}
%A more complex algorithm can be used to implement the scheme for arbitrary powers as well. 
In this case, we want to find a large set $\calT'$, given $\calT(M)$, and then 
choose a power assignment making the set feasible. 

We start with the link selection step.
Leveraging the fact that our input instance $\calT(M)$ is sparse, 
we implement a distributed version of a centralized algorithm for
choosing such a set proposed in \cite{KesselheimSoda11}. 
%Below, we provide a sketch of how this can be achieved; see Appendix \ref{app:pcsampling} for further details. 

The following algorithm was shown in \cite{KesselheimSoda11} to give constant factor approximation for finding the largest feasible subset of any given linkset: Given a linkset $R$, let the selected set be $L$, initially empty. Go through all links in ascending order of length (breaking ties arbitrarily). If the 
condition
\begin{equation}
a^{\calL}_{L}(\ell) + a^{\calU}_{\ell}(L) \leq \tau\ ,
\label{selctionalg}
\end{equation}
holds, for a constant $\tau$, then the link $\ell$ is added to $L$ (Eqn.~1 of \cite{KesselheimSoda11} can be seen to be essentially equivalent to the above equation). 

For simplicity, we assume in this abstract that that receivers can measure the SINR of a successful link (i.e., can measure if the link succeeded with a desired threshold $\tau$ or not). This assumption can be removed.

% We need a distributed algorithm for finding a large feasible subset $\calT''$ of $\calT'$, mimicking the behavior of the algorithm of \cite{KesselheimSoda11}, described with Eqn.~\ref{selctionalg}. 
% 
Assume the formation of $\calT$ using \alg{Init} required $R$ rounds.
Our selection algorithm \textbf{Distr-Cap} has the following outline.
\begin{quote}
\textbf{Distr-Cap} contains $R$ phases. In phase $i$, links in $\calT(M)$ that were formed in round $i$ of \alg{Init} decide whether or not to add themselves to the selected set $\calT'$.
\end{quote}

By the description of \alg{Init}, links formed in the same round belong to the same length class (also, links formed in a particular round are smaller than all links formed in later rounds).

For all $i$, phase $i$ of \textbf{Distr-Cap} consists of one slot-pair. Let $Q$ be the links participating in this phase (i.e., links formed during round $i$ of \alg{Init}).
During the first slot of the phase, the following happens:
\begin{enumerate}
\item All links $\ell$ in $\calT'$ (the set selected so far) transmit using linear power (i.e. $P_{\ell} = \ell^{\alpha}$).
\item Links in $Q$ transmit with iid probability $p$ (small constant) using linear power. 
\item Receivers in $Q$ record a success if they received a message across the link with SINR $\leq \tau/4$. Let $\tilde Q$ be the set of links that recorded success.
\end{enumerate}

During the second slot:
\begin{enumerate}
\item Links in $\calT'_d$ (dual of $\calT'$) transmit using linear power (i.e., the receivers of $\calT'$ transmit using linear power).
\item Links in $\tilde Q_d$ (dual of $\tilde Q$) transmit with iid
probability $\gamma_2^2 \cdot p$ for some $\gamma_2 < 1$, using linear
power.
\item Receivers in $\tilde Q_d$ record a success if they received a message across the link with SINR $\leq \frac{ \gamma_2 \cdot \tau}{4}$. 
\end{enumerate}

Thus, at the end of a second slot, a success is recorded at a sender
of a (original) link in $Q$, if the transmission succeeded in both
directions (the original link and the dual) with the required SINR
threshold. Let $Q^*$ be the set of links that succeeded.
% This process will be repeated $\lambda_2 \log n$ times, and the largest $Q^*$ will 
% be selected. Choosing the best solution can be done easily using the existing connected network $\calT$ (as was done for mean power). 
The updated solution is then $\calT' \leftarrow \calT' \cup Q^*$,
which simply means that links add themselves to $\calT'$ if they
succeeded in both directions.

We now analyze this algorithm. The following sub-subsections show that
the selected solution is feasible and large (a constant factor
approximation to the largest feasible subset), respectively.

\subsubsection{$\calT'$ is feasible}

We now show that $\calT'$ satisfies Eqn.~\ref{selctionalg}.
It suffices to show that for all $\ell \in \calT'$, if $L \subseteq \calT'$ are the links no larger than $\ell$ then:
\[ a^{\calL}_{L}(\ell) + a^{\calU}_{\ell}(L) \leq \tau\ . \]

The following two Lemmas imply the above.
\begin{lemma}
$a^{\calL}_{L}(\ell) \leq \frac{\tau}{4}$.
\end{lemma}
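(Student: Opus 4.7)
The plan is to exploit the fact that $\ell$'s success in the first slot of the phase in which it was added to $\calT'$ already certifies a bound on the total linear-power affectance inflicted on $\ell$ by every link transmitting in that slot; the proof then reduces to showing that every link in $L$ was one of those transmitters.

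First I would let $i$ denote the phase in which $\ell$ was added to $\calT'$ and decompose $L = L_{<i} \cup L_i$ according to whether the link was added in an earlier phase or in phase $i$ itself. This partition is exhaustive because \alg{Init} orders its rounds by length class and \textbf{Distr-Cap} processes phases in the same order, so any link added in a phase $j > i$ has length strictly greater than $|\ell|$ and therefore cannot belong to $L$.

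Next I would verify that each link in $L$ transmits in slot~1 of phase $i$ with linear power. The set $L_{<i}$ sits inside the running $\calT'$ at the start of phase $i$, and step~(1) of slot~1 dictates that every such link transmits. For a link $\ell' \in L_i$, the chain $\ell' \in Q^* \subseteq \tilde Q$ together with the definition of $\tilde Q$ forces the receiver of $\ell'$ to have recorded a success in slot~1, which can happen only if $\ell'$ itself transmitted in that slot (step~(2)). Consequently $L$ is contained in the set of slot-1 transmitters of phase $i$.

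Finally, I would invoke $\ell \in Q^* \subseteq \tilde Q$: the receiver of $\ell$ recorded success in slot~1, meaning the aggregate linear-power affectance on $\ell$ from its co-transmitters in that slot is at most $\tau/4$. Monotonicity of affectance then yields $a^{\calL}_L(\ell) \le \tau/4$. The only subtle step is the identification of $L_i$ (a set defined via the \emph{final} $\calT'$) with transmitters in slot~1 of phase $i$, as opposed to some later slot; this is sound because links enter $\calT'$ exactly in the phase in which they first succeed and the algorithm never revisits them, so every link of length at most $|\ell|$ that ever lands in $\calT'$ has already done so by the end of phase $i$.
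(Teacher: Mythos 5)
Your proof is correct and follows essentially the same route as the paper's: both arguments observe that every link of $L$ transmits with linear power in the first slot of $\ell$'s phase (earlier-phase links because they already sit in $\calT'$, same-phase links because membership in $Q^*\subseteq\tilde Q$ requires having transmitted in that slot), and then read off the bound from the $\tau/4$ threshold used to admit $\ell$ into $\tilde Q$. Your version merely makes explicit the decomposition $L=L_{<i}\cup L_i$ and the monotonicity step that the paper leaves implicit.
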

\begin{proof}
To see this, note the selection  of $\tilde Q$ in the first slot of
each slot-pair. We claim that during this slot, all links in $L$ are
transmitting with linear power. For links in $L$ that were selected in
an earlier phase, this is obviously true. For links in $Q$ that will
be selected in $L$, this is true as well, since eventual admission in
$L$ is only possible (though not guaranteed) if the link decided to transmit during the first slot.

The proof of the Lemma is completed by noting the SINR threshold used in  the selection of $\tilde Q$.
\end{proof}

\begin{lemma}
$a^{\calU}_{\ell}(L) \leq \frac{\tau}{4}$. 
\end{lemma}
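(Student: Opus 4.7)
The plan is to reduce the uniform-power affectance $a^{\calU}_\ell(L)$ to a linear-power affectance in the dual direction, namely $a^{\calL}_{L_d}(\ell_d)$, and then to extract the needed bound from the admission condition that caused $\ell$ to be added to $\calT'$.

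The first step is a pointwise duality identity. Write $\ell=(x,y)$ and $\ell'=(u,v)\in L$, so $d(\ell')\le d(\ell)$. A direct computation gives
\[
a^{\calU}_\ell(\ell') \;=\; c(u,v)\left(\frac{d(\ell')}{d(x,v)}\right)^{\!\alpha}, \qquad a^{\calL}_{\ell'_d}(\ell_d) \;=\; c(v,u)\left(\frac{d(\ell')}{d(v,x)}\right)^{\!\alpha},
\]
where in the linear-power expression the ratio $P_{\ell'_d}/P_{\ell_d} = d(\ell')^\alpha/d(\ell)^\alpha$ exactly cancels one factor of $(d(\ell)/d(\ell'))^\alpha$. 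Since $d(x,v)=d(v,x)$ and every $c$-coefficient lies in $[\beta,2\beta]$ (by the calibration fixed in Sec.~\ref{sec:technical}), we get $a^{\calU}_\ell(\ell') \le 2\, a^{\calL}_{\ell'_d}(\ell_d)$. Summing over $\ell'\in L$ yields $a^{\calU}_\ell(L) \le 2\, a^{\calL}_{L_d}(\ell_d)$.

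The second step uses the fact that $\ell$ was admitted to $\calT'$ during some phase $i$. Admission into $Q^*$ requires the receiver of $\ell_d$ (namely the sender $x$ of $\ell$) to register the required SINR in the second slot, i.e.\ $a^{\calL}_S(\ell_d) \le \gamma_2\tau/4$, where $S$ is the actual set of linearly-powered transmitters in that slot. I claim $L_d\subseteq S$: links of $L$ from phases $<i$ have their duals in $\calT'_d$, which transmits in full in slot 2; links of $L$ from phase $i$ itself belong to $Q^*$, so their duals must have been in the random sample of $\tilde Q_d$ that transmitted (otherwise no success could have been recorded for them in slot 2). Hence $a^{\calL}_{L_d}(\ell_d)\le a^{\calL}_S(\ell_d)\le \gamma_2\tau/4$.

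Chaining the two estimates gives $a^{\calU}_\ell(L)\le 2\cdot \gamma_2\tau/4 = \gamma_2\tau/2$, which is at most $\tau/4$ once we fix $\gamma_2\le 1/2$ (this is precisely the role of the extra factor $\gamma_2$ introduced in the threshold of slot~2 and the sampling probability $\gamma_2^2 p$). The only real subtlety I expect is the bookkeeping in the second step: one has to be certain that every $L$-member from the \emph{same} phase as $\ell$ did in fact transmit its dual in slot~2, because the random sampling of $\tilde Q_d$ is what makes this non-automatic. This is pinned down by the definition of $Q^*$, which bundles success in both directions; once that is observed, the rest is the routine duality calculation carried out above.
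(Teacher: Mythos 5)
Your proof is correct and follows essentially the same route as the paper: bound the dual linear-power affectance $a^{\calL}_{L_d}(\ell_d)$ by $\gamma_2\tau/4$ via the slot-2 admission criterion (checking that all of $L_d$ indeed transmitted in that slot), then transfer to $a^{\calU}_\ell(L)$ via the uniform/linear duality relation. The only difference is presentational: you derive the duality pointwise with the explicit constant $2$ coming from $c(\cdot,\cdot)\in[\beta,2\beta]$, whereas the paper cites it as Obs.~4 of \cite{KV10} with constant $1/\gamma_2$; both close the bound the same way.
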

\begin{proof}
The selection process implemented during the second slot guarantees
that  $a^{\calL}_{L_d}(\ell_d) \leq \frac{\gamma_2 \tau}{4}$, where $L_d$ is the dual set of $L$ and $\ell_d$ is the dual of $\ell$ (this follows the proof of the previous Lemma almost verbatim).

To complete the proof, 
we use a result from \cite[Obs.~4]{KV10}. It was shown that for a constant $\gamma_2$, and links $\ell$ and $\ell'$,
\begin{claim}
$\gamma_2 a^{\calL}_{\ell'_d}(\ell_d) \leq a^{\calU}_{\ell}(\ell') \leq \frac1{\gamma_2} a^{\calL}_{\ell'_d}(\ell_d)$.
\label{dualityrelation}
\end{claim}

Using this claim, we get that
\[ a^{\calU}_{\ell}(L) = \sum_{\ell' \in L} a^{\calU}_{\ell}(\ell') 
   \leq \sum_{\ell' \in L_d} \frac1{\gamma_2} a^{\calL}_{\ell'_d}(\ell_d)
   = \frac1{\gamma_2} a^{\calL}_{L_d}(\ell_d) 
   \le \frac{\tau}{4}\ , \]
%
\iffalse
\begin{align*}
a^{\calU}_{\ell}(L) & = \sum_{\ell' \in L} a^{\calU}_{\ell}(\ell') 
  \leq \sum_{\ell' \in L_d} \frac1{\gamma_2} a^{\calL}_{\ell'_d}(\ell_d) \\
   & = \frac1{\gamma_2} a^{\calL}_{L_d}(\ell_d) \leq \frac1{\gamma_2} \frac{\gamma_2 \tau}{4} = \frac{\tau}{4}\ , 
   \end{align*}
\fi
as required.
\end{proof}

\subsubsection{$\calT'$ is large}

Define, following \cite{SODA12,KesselheimSoda11},
\begin{equation*}
f_{\ell}(\ell') = \left\{
\begin{array}{rl}
a^{\calU}_{\ell'}(\ell) + a^{\calL}_{\ell}(\ell') & \text{if }  \ell \leq \ell',\\
0 & \text{otherwise}.\\
\end{array} \right.
\label{eq:amenability}
\end{equation*}
This definition is essentially equivalent to the definition of that of $f_{\ell}(\ell')$ of \cite{SODA12}  and of $w(\ell, \ell')$ of \cite{KesselheimSoda11} (also see Appendix \ref{sec:amenabilityprimer}). Those definitions are presented in terms
of distances.
The reason why we choose to define 
$f_{\ell}(\ell')$ in terms of affectances here, instead of distances, is that 
affectances (or their SINR equivalents) can be measured by the link receivers and thus used as a selection criteria.
For a set $X$, define $f_{\ell}(X) = \sum_{\ell' \in X}f_{\ell}(\ell')$
and $f_{X}(\ell') = \sum_{\ell \in X}f_{\ell}(\ell')$.

Recall that the input set $\calT$ is $O(1)$-sparse, which is of
crucial importance.
Consider once again the execution of the algorithm for phase $i$.
Let $\calT'_{i-1}$ be the selected set at the end of phase $i-1$. 
As before, let $Q$ be the links considered in phase $i$
and $Q^*$ be the links that succeeded in that phase. 
Since $\calT$ is $O(1)$-sparse, so is $Q$.

\begin{lemma}
 Let $Q'$ be the subset of links $\ell$ in $Q$ with
 $f_{\calT'}(\ell) \le \gamma_2^2 \cdot \tau/8$. Then, $\Ex(|Q^*|) = \Omega(|Q'|)$.
% Let $Q'' \subseteq Q$ be such that 
% $f_{\calT'}(\ell) > \gamma_2^2 \cdot \tau/8$ for all $\ell \in Q''$ and define 
% $Q' = Q \setminus Q''$. Then, $\Ex(|Q^*|) = \Omega(|Q'|)$.
\label{samplingchoosesbig}
\end{lemma}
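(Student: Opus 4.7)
The plan is to show that each link $\ell \in Q'$ ends up in $Q^*$ with probability $\Omega(p)$, whereupon linearity of expectation gives $\Ex(|Q^*|) = \Omega(p \cdot |Q'|) = \Omega(|Q'|)$. Fix such an $\ell$. Because \alg{Init} processes length classes in increasing order and \textbf{Distr-Cap} mirrors that order, every $\ell' \in \calT'$ satisfies $\ell' \le \ell$, so $f_{\calT'}(\ell) = a^{\calL}_{\calT'}(\ell) + a^{\calU}_{\ell}(\calT')$. The hypothesis $f_{\calT'}(\ell) \le \gamma_2^2 \tau/8$ therefore gives both $a^{\calL}_{\calT'}(\ell) \le \gamma_2^2 \tau/8$ and $a^{\calU}_{\ell}(\calT') \le \gamma_2^2 \tau/8$. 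Applying the right-hand inequality of Claim~\ref{dualityrelation} termwise and summing yields
\[
a^{\calL}_{\calT'_d}(\ell_d) \;\le\; \frac{1}{\gamma_2}\, a^{\calU}_{\ell}(\calT') \;\le\; \gamma_2\, \tau/8.
\]
Thus the ``legacy'' affectance contributed by $\calT'$ in slot one and by $\calT'_d$ in slot two is already strictly below the thresholds $\tau/4$ and $\gamma_2\tau/4$ used by the algorithm.

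In the first slot, condition on the event that $\ell$ samples (probability $p$). The only interference on $\ell$ beyond that from $\calT'$ comes from other links $\ell'' \in Q$ that also sample, each independently with probability $p$. Because $Q \subseteq \calT(M)$, Thm.~\ref{constantsparsity} gives that $Q$ is $O(1)$-sparse, and because every link of $Q$ was formed in the same round $i$ of \alg{Init}, all links of $Q$ lie in a single length class. A standard annulus-counting argument identical to the one inside the proof of Lemma~\ref{smalllinksuccess3} then yields $a^{\calL}_{Q}(\ell) = O(1)$. Hence $\Ex[a^{\calL}_{T_1}(\ell)] \le p \cdot O(1)$, where $T_1$ are the co-sampling links in $Q$, and by Markov this extra affectance exceeds $\tau/8$ with probability at most $1/2$ provided $p$ is chosen small enough. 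Combined with the $\gamma_2^2 \tau/8$ bound from $\calT'$, the total affectance on $\ell$ is at most $\tau/4$ with probability at least $1/2$, so $\ell$ enters $\tilde Q$ with probability at least $p/2$.

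The second slot is analysed symmetrically. Condition additionally on $\ell \in \tilde Q$ and on $\ell_d$ sampling (probability $\gamma_2^2 p$). The extra interferers are sampling links of $\tilde Q_d \subseteq Q_d$; since sparsity of a link set is inherited by its dual up to constant factors, $Q_d$ is also $O(1)$-sparse and single-length-class, giving $a^{\calL}_{Q_d}(\ell_d) = O(1)$ and $\Ex[a^{\calL}_{T_2}(\ell_d)] \le \gamma_2^2 p \cdot O(1)$. For $p$ small, Markov makes the extra affectance at most $\gamma_2\tau/8$ with probability at least $1/2$, so in total $a^{\calL}_{\calT'_d \cup T_2}(\ell_d) \le \gamma_2\tau/4$ and the receiver of $\ell_d$ declares success. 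Taking the three constant-probability events together, $\Pr[\ell \in Q^*] = \Omega(p)$, which proves the lemma. The main obstacle is the inheritance step: one must verify that the $O(1)$-sparsity of $\calT(M)$ really does pass to $Q$ (a subset inside a single length class), and that sparsity-within-a-length-class alone suffices to bound $a^{\calL}_{Q}(\ell)$ by a constant independent of $|Q|$. This is exactly the ingredient that makes the role of $\calT(M)$ (rather than the raw tree $\calT$) indispensable, and why the constants $p$ and $\gamma_2$ must be fixed small enough to absorb the hidden $O(1)$ from the annulus count.
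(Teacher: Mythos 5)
Your proposal is correct and follows essentially the same route as the paper's proof: condition on $\ell$ sampling, bound the co-sampling interference from $Q$ via $O(1)$-sparsity, the annulus count, and Markov, handle the legacy interference from $\calT'$ (and from $\calT'_d$ in the second slot, via Claim~\ref{dualityrelation}) using the definition of $Q'$, and multiply the constant success probabilities of the two slots. The only quibble is that the per-link success probability is $\Omega(p^2)$ rather than $\Omega(p)$ (three independent constant-probability events include two sampling events), which is immaterial since $p$ is a constant; your explicit decomposition $f_{\calT'}(\ell) = a^{\calL}_{\calT'}(\ell) + a^{\calU}_{\ell}(\calT')$ is a nice clarification of a step the paper leaves implicit.
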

\begin{proof}
Consider any link $\ell \in Q'$. We shall show below that $\Pro(\ell
\in Q^*) = \Omega(1)$, which implies the Lemma.

In the first slot,
$\ell$ transmits with probability $p$. We claim that:
\begin{claim}
$\Pro(a^{\calL}_{T}(\ell) \leq \tau/8) \geq \frac12$, where $T$ are the links in $Q$ transmitting.
\label{slot1success}
\end{claim}
\begin{proof}
%This basically follows from the sparsity of $Q$. 
Let $\rho$ be such that length class in phase $i$ covers lengths in $[\rho, 2 \rho)$.
Since $Q$ is $O(1)$-sparse, it follows that balls of radius $\rho$ 
contain only a constant number of nodes that have links in $Q$. 
The claim now follows from arguments essentially identical to those in Lemma \ref{smalllinksuccess3}, after setting the probability $p$ sufficiently small.
\end{proof}

Since $\ell \in Q'$, we see that $a^{\calL}_{\calT'_{i-1}}(\ell)
\leq \tau/8$, by the definition of $Q'$.
Thus, if $a^{\calL}_{T}(\ell) \leq \tau/8$, 
then $a^{\calL}_{T \cup \calT'_{i-1}}(\ell) \leq \tau/4$, 
and the transmission is recorded as a success.
Thus, $\ell$ transmits and is recorded as a success with probability $\frac12 p$.
In other words, 
\begin{equation}
\Pro(\ell \in \tilde Q) \geq \frac12 p\ .
\label{qtildeprob}
\end{equation}

Now, condition on $\ell$ being in $\tilde Q$. Then $\ell_d$ transmits with probability $\gamma_2 p$. 
The following claim can be proven using Claim \ref{dualityrelation} and is similar to Claim \ref{slot1success}.
\begin{claim}
$\Pro(a^{\calL}_{T_d}(\ell_d) \leq \frac{\gamma_2 \cdot \tau}{8}) \geq \frac12$, where $T_d \subseteq \tilde Q_d$ are the (dual) links transmitting in this slot.
\end{claim}

 Following a argument similar to the one used for the first slot, we see that in the second slot, such a transmission is recorded as a success as well.

Thus, $\Pro(\ell \in Q^*|\ell \in \tilde Q) \geq \frac12 \gamma_2 p
$. Combining this with Eqn.~\ref{qtildeprob}, we get $\Pro(\ell \in
Q^*) \geq \frac14 \gamma_2 p^2 = \Omega(1)$, completing the proof of the Lemma.
\end{proof}

This leads to the desired bound on the size of $\calT'$.

\begin{theorem}
The set $\calT'$ chosen by the algorithm satisfies $\Ex(|\calT'|) = \Omega(|\calT(M)|)$.
\end{theorem}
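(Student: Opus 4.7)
The plan is to sum the per-phase contributions from Lemma~\ref{samplingchoosesbig} and then cancel the resulting error term against $\Ex[|\calT'|]$ using the $O(1)$-sparsity of $\calT(M)$ from Thm.~\ref{constantsparsity}. Let $\calT'_i$ denote the selected set after phase $i$, so that $\calT' = \bigcup_i Q^*_i$, $\calT(M) = \bigsqcup_i Q_i$, and $\calT'_{i-1}$ is the conditioning state at the start of phase $i$. Lemma~\ref{samplingchoosesbig} provides $\Ex[|Q^*_i| \mid \calT'_{i-1}] = \Omega(|Q'_i|)$, where $Q'_i = \{\ell \in Q_i : f_{\calT'_{i-1}}(\ell) \le \gamma_2^2 \tau/8\}$.

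First I would apply Markov's inequality to the complement of $Q'_i$ inside $Q_i$, yielding a bound of the form $|Q'_i| \ge |Q_i| - c_0 \sum_{\ell \in Q_i} f_{\calT'_{i-1}}(\ell)$. Summing this over all phases, taking expectations, and using that every $\ell \in \calT(M)$ lies in exactly one $Q_i$ with $\calT'_{i-1} \subseteq \calT'$ (so $f_{\calT'_{i-1}}(\ell) \le f_{\calT'}(\ell)$ by non-negativity of $f$), one obtains
\[ \Ex[|\calT'|] \;\ge\; c_1\, |\calT(M)| \;-\; c_2 \,\Ex\!\left[\sum_{\ell \in \calT(M)} f_{\calT'}(\ell)\right]. \]

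The main step, and the only real obstacle, will be to show that the expected sum on the right is $O(\Ex[|\calT'|])$. Switching the order of summation rewrites it as $\sum_{\ell' \in \calT'} \sum_{\ell \in \calT(M),\, \ell \ge \ell'} f_{\ell'}(\ell)$, so it suffices to bound the inner sum by an absolute constant for every fixed $\ell'$. This is precisely the amenability consequence of $O(1)$-sparsity: one orders the candidate links $\ell \in \calT(M)$ with $\ell \ge \ell'$ by the distance of a chosen endpoint from $\ell'$, uses sparsity to cap the number of such links in each geometric annulus by a constant, and then observes that the contributing $a^{\calL}_{\ell'}(\ell)$ and $a^{\calU}_{\ell}(\ell')$ terms decay as an $\alpha$-th power of distance, so both components sum to a geometric series of $O(1)$ total value. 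This is the standard sparsity-to-amenability translation summarized in Thm.~\ref{sparsecapacity} and Appendix~\ref{sec:amenabilityprimer}, drawing on \cite{SODA12,KesselheimSoda11}. Combining the two inequalities then yields $\Ex[|\calT'|] \ge c_1 |\calT(M)| - c_3 \Ex[|\calT'|]$, which rearranges to the claimed $\Ex[|\calT'|] = \Omega(|\calT(M)|)$.
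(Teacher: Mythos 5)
Your proof is correct, but it takes a genuinely different route from the paper's. You benchmark $\calT'$ directly against the whole set $\calT(M)$: a per-phase accounting shows every link of $\calT(M)$ is either selected (counted via Lemma~\ref{samplingchoosesbig}) or excluded from $Q'_i$ because $f_{\calT'_{i-1}}(\ell)$ is large, and the total excluded mass is charged back to $|\calT'|$ via $\sum_{\ell'\in\calT'} f_{\ell'}(\calT(M)) = O(|\calT'|)$, i.e.\ the $O(1)$-amenability of the sparse set $\calT(M)$; the resulting self-bounding inequality $\Ex[|\calT'|] \ge c_1|\calT(M)| - c_3\Ex[|\calT'|]$ rearranges to the claim. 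The paper instead benchmarks against a large \emph{feasible} subset $O\subseteq\calT(M)$ guaranteed by Thm.~\ref{sparsecapacity}, and bounds the "blocked'' links using Thm.~1 of \cite{KesselheimSoda11} (namely $f_{\ell}(R)=O(1)$ for feasible $R$), splitting $O\setminus\calT'$ into high- and low-$f_{\calT'}$ halves. What your approach buys is the elimination of both the appeal to the existence of a large feasible subset and the case analysis on $|R_1|$ versus $|R_2|$; what it costs is that you need amenability of the entire (non-feasible) set $\calT(M)$ rather than only the $f_\ell(R)=O(1)$ bound for feasible $R$ — though the paper's Appendix~\ref{sec:amenabilityprimer} asserts exactly this sparsity-to-amenability implication, so the ingredient is available. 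Two small points: your statement that sparsity caps the number of relevant links "in each geometric annulus by a constant'' is imprecise — the count in the annulus at distance $D$ from a link of length $d$ grows as $O((D/d)^2)$ (a constant per ball of radius $d/8$), and it is the $\alpha>2$ decay of affectance that makes the resulting series $O(\psi(d/D)^{\alpha-2})$ summable; and you should note (as you implicitly do via $\calT'_{i-1}\subseteq\calT'$ and monotonicity of $f$) that the conditioning set in Lemma~\ref{samplingchoosesbig} is the partial solution, not the final one, so the inequality goes in the right direction. Neither affects the validity of the argument.
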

\begin{proof}
% As we already noted, our algorithm is pretty close to the serialized algorithm, specially when it comes to links in different length classes. The potential problem lies in dealing with the same link class, since the distributed algorithm, instead selecting links one by one, selects them in parallel. However, using the above lemma, proving that this algorithm has the required approximation factor can be done essentially following the proof of Thm.~5 in \cite{KesselheimSoda11}, with some due care to capture the differences due to the random selection step.

By Thm.~\ref{sparsecapacity}, there exists a set $O \subseteq \calT$ such that
$O$ is feasible and $|O| = \Omega(|\calT|)$. Thus, it suffices to show that
$\Ex(|\calT'|) = \Omega(|O|)$ for any feasible set $O$.

Thm.~1 of \cite{KesselheimSoda11} shows that for a feasible link set $R$ and any link $\ell$,
\begin{equation}
f_{\ell}(R) =  O(1)\ .
\label{flbound}
\end{equation}

Consider the set $R = O \setminus \calT'$. We divide $R$ further into two
subsets: $R_1 = \{\ell' \in R: f_{\calT'}(\ell') > \gamma_2^2
\tau/8 \}$ and $R_2 = R \setminus R_1$.
Summing Eqn.~\ref{flbound} for all $\ell \in \calT'$,
\begin{equation}
f_{\calT'}(R) =  O(|\calT'|)\ .
\label{alglarge}
\end{equation}
By definition of $R_1$, $f_{\calT'}(R_1) > \frac18 \gamma_2^2 \tau |R_1|$. Assume first that $|R_1| \geq |R|/2$. Then, we get, $f_{\calT'}(R_1) > \frac1{16} \gamma_2^2 \tau |R|$, which combined with Eqn.~\ref{alglarge} gives, $|\calT'| = \Omega(f_{\calT'}(R)) \geq f_{\calT'}(R_1) = \Omega(|R|)$. Since $|O| \leq |\calT'| + |R|$, this clearly implies that $|\calT'| = \Omega(|O|)$. Otherwise assume, $|R_1| < |R|/2$ and thus $|R_2| > |R|/2$. But Lemma \ref{samplingchoosesbig} implies that $\Omega(|R_2|)$ links were chosen by the algorithm (in expectation), from which $\Ex(|\calT'|) = \Omega(|O|)$ follows.
\end{proof}

% We proceed by length classes, starting from the shortest links. Assume the current solution is $\calT'$, and we are considering some length class with  links 
% $Q$. To select links in $Q$, existing links in $\calT'$ will transmit with linear power, while links in $Q$ will transmit with iid probability $p$ with linear power. Links that succeed with required SINR levels are selected in to $\calT'$.

% How does this random procedure differ from the centralized algorithm? 

% In the centralized algorithm, when deciding to add link $\ell$, we have to consider the affectance from selected links in earlier length classes \emph{and} links selected from the same length class before $\ell$. The first of this is faithfully reproduced in the above description. But what about the links from the same length class? 
% In general, the sampling process described can fail. However, recall that the input set to the algorithm is $\calT'$, which is $O(1)$-sparse. We have shown in the course of proving Lemma \ref{sparsitytoindependence}, that such a set is ``almost'' feasible already. In effect, on average, the interference from a randomly selected (but large) subset of $Q$ is small and does not change the solution set too much. 

\subsubsection{Computing the power assignment}
So far we have dealt with the \emph{selection} of a large set of feasible links.
Once the link set $\calT'$ is identified, we must select the power assignments for 
this set. Given a set of links that are known to be feasible, there exists
a large body of work proposing algorithms that converge to a power assignment
making the assignment feasible. For example, two recent ones are \cite{DBLP:conf/infocom/LotkerPPP11} and \cite{DBLP:conf/icalp/DamsHK11}. Using such an algorithm as a black box, we can find the appropriate power assignment.

%Thus, very similarly to Thm.~\ref{mpfinal}:
\begin{theorem}
  There exists a distributed algorithm that connects the nodes in
  $O(\log n)$ slots. Assuming that there exists an algorithm to find the power
  assignment for a feasible set in time $\eta$,
  this algorithm completes in time $O(\log  n (\log \Delta \cdot \log n + \eta))$.
\end{theorem}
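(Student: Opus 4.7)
The plan is to combine Algorithm~\ref{alg1} (\alg{TreeViaCapacity}) with the arbitrary-power selection procedure \alg{Distr-Cap} developed just above, verify that the hypothesis of Thm.~\ref{thm:logn} is met with a constant $\delta$, and then account for the per-iteration distributed cost. The $O(\log n)$ slot bound will come directly from Thm.~\ref{thm:logn}, and the runtime bound by multiplying the per-iteration cost by the number of outer iterations.

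For the schedule length, I would chain two results already proved. Thm.~\ref{constantsparsity} provides $\Ex(|\calT(M)|) = \Omega(|\calT|)$ and shows that $\calT(M)$ is $O(1)$-sparse, and the theorem ``$\calT'$ is large'' applied to the sparse set $\calT(M)$ yields $\Ex(|\calT'|) = \Omega(|\calT(M)|)$. Composing these two gives $\Ex(|\calT'|) = \Omega(|\calT|)$, i.e., a constant $\delta$ in the sense of Thm.~\ref{thm:logn}. Invoking Thm.~\ref{thm:logn} then produces an aggregation tree that is scheduled under the computed arbitrary power assignments in $O(\log n)$ slots. Because \alg{Init} simultaneously forms both the aggregation and dissemination link, and \alg{Distr-Cap} explicitly verifies feasibility of both a link and its dual in every slot-pair, the union is a bi-tree, scheduled in opposite orders as required by the definition.

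For the runtime, I would bound the cost of a single outer iteration and then multiply by the iteration count. Each iteration performs: (a) one call to \alg{Init} on the current active set $P_i$, costing $O(\log \Delta \cdot \log n)$ by Thm.~\ref{thm:1}; (b) one sweep over the freshly built tree $\calT$ so that every node can decide locally, from its degree, whether it is in $M$, which is dominated by the preceding cost; (c) an execution of \alg{Distr-Cap}, consisting of $R = O(\log \Delta)$ phases each of two slots, totaling $O(\log \Delta)$; and (d) one black-box call to a power-assignment subroutine on the feasible set $\calT'$ at cost $\eta$. Running these steps in parallel with the already-committed portion of the bi-tree requires only a constant-factor time-division multiplexing overhead, so one iteration costs $O(\log \Delta \cdot \log n + \eta)$. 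Since Thm.~\ref{thm:logn} guarantees that $O(\log n)$ iterations suffice w.h.p., the total runtime is $O(\log n (\log \Delta \cdot \log n + \eta))$, matching the claim.

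The main obstacle I anticipate is making sure that the composition of the per-iteration feasible sets $\calT'$ really yields a valid aggregation schedule, and that the power-assignment black box does not break anything when applied iteration by iteration. Feasibility within one iteration is established by the lemmas preceding the statement; across iterations it is automatic because distinct iterations occupy disjoint slots in the final schedule, so power settings computed in one iteration cannot interfere with those of another. The leaf-to-root ordering follows from the invariant in Thm.~\ref{thm:logn}: a node leaves $P_i$ only once its unique outgoing link has been committed to some $\calT'$ in iteration $i$, and all its descendants were committed in strictly earlier iterations. A minor technical point is that receivers in \alg{Distr-Cap} must be able to measure whether their SINR meets the prescribed threshold, which is inherited here as a stated assumption; this is the only place where the external power-assignment routine's output must be consulted, and it can be announced over the still-available initial tree with $O(\log \Delta \cdot \log n)$ overhead already accounted for.
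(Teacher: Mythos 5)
Your proposal is correct and follows essentially the same route as the paper: it chains Thm.~\ref{constantsparsity} with the largeness guarantee for $\calT'$ to get a constant $\delta$, invokes Thm.~\ref{thm:logn} for the $O(\log n)$ slot and iteration bounds, and accounts for the per-iteration cost of \alg{Init}, \alg{Distr-Cap}, and the power-assignment black box exactly as the paper does (the paper leaves most of this accounting implicit, mirroring its proof of Thm.~\ref{mpfinal}). Your explicit remarks on cross-iteration feasibility via disjoint slots and on the leaf-to-root ordering are consistent with the argument already made inside Thm.~\ref{thm:logn}.
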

As an example, if we select the algorithm from \cite{DBLP:conf/infocom/LotkerPPP11}, $\eta$ can be bounded by $O(\log \Delta (\log n + \log \log \Delta))$. This proves the first part of Thm.~\ref{thm:3}.

\section{Conclusions} 
Our distributed algorithms have efficiency and effectiveness that
appear to be close to best possible. An interesting direction would be
to treat dynamic situations, including asynchronous node wakeup, node
and link failures, and mobility.

\bibliographystyle{abbrv}
\bibliography{references}		

\appendix

\section{Missing Proofs}
\label{sec:missing}
\noindent \textbf{Proof of Thm.~\ref{constantsparsity}}
\begin{proof}
Recall that $M$ is the set of nodes of degree at most $\rho = \frac{160}{p^2}$
in $\calT$. For sets $X$ and $Y$, let $\edges(X, Y)$ be the number
of links with senders in $X$ and receivers in $Y$. We claim that setting $\calT(M) = \edges(M, M)$ fulfills the properties claimed in the theorem.

The $O(1)$-sparsity follows by noting that the nodes in $M$ have
degree $O(1)$; the proof of Lemma \ref{pointsinball} can be followed
verbatim using the constant-degree bound instead of the $O(\log n)$-bound employed there.

Thus, what remains to be proven is that $\Ex(|\edges(M, M)|) = \Omega(n) = \Omega(|\calT|)$.
Let $M' = P \setminus M$ (recall that $P$ is the set of all nodes). 
Since $\calT$ is a tree, $|\calT| = n - 1$. Then, since the number of unique links adjacent to $M$ is
at least $\frac12 M \rho$, it is easily
computed that $|M'| \leq \frac{2n}{\rho}$ and thus $|M| \geq n (1 - \frac2{\rho})$.
We show in Lemma \ref{mprimedegbound} below that $\Ex(|\edges(M', P)|) \leq \frac{n}{e^{9}}$. Note that
since $\calT$ is a connected tree, $|\edges(M, P)| \geq |M| - 1$. Thus, 
\begin{align*}
\Ex(|\edges(M, M)|) & \geq \Ex(|\edges(M, P)|) - \Ex(|\edges(M, M')|) \\
  &       \geq \Ex(|M|) - 1 - \Ex(|\edges(M', P)|) \\
  & \geq n \left(1 - \frac2{\rho}\right) - \frac{n}{e^{9}} 
  = \Omega(n)\ ,
\end{align*}
which implies the theorem.
%\footnote{Verify...}
\end{proof}

\begin{lemma}
$\Ex(|\edges(M', P)|) \leq \frac{n}{e^{9}}$.
%, with probability at least $\frac12$.
\label{mprimedegbound}
\end{lemma}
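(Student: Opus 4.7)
My plan is to apply Theorem~\ref{seconddegbound} pointwise and combine via linearity of expectation. Since every link counted in $\edges(M', P)$ has its sender in $M'$, and each sender $u$ contributes at most $\deg(u)$ outgoing links, I would start from
\[
|\edges(M', P)| \le \sum_{u \in P}\deg(u)\cdot \mathbf{1}[\deg(u) > \rho]\ ,
\]
and reduce to bounding the per-node expectation $\Ex(\deg(u)\cdot \mathbf{1}[\deg(u)>\rho])$.

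For each node $u$, I would use the standard tail-sum identity
\[
\Ex(\deg(u)\cdot \mathbf{1}[\deg(u) > \rho]) = (\rho+1)\,\Pro(\deg(u) > \rho) + \sum_{k \ge \rho+2}\Pro(\deg(u) \ge k)\ ,
\]
and then plug in the exponential tail bound $\Pro(\deg(u) \ge k) \le e^{-p^2 k/8}$ from Theorem~\ref{seconddegbound}. The residual tail is a geometric series, so the right-hand side is at most
\[
(\rho+1)\, e^{-p^2(\rho+1)/8} + \frac{e^{-p^2(\rho+2)/8}}{1 - e^{-p^2/8}}\ .
\]
With the choice $\rho = 160/p^2$ the exponent is $p^2\rho/8 = 20$, so each per-node term is at most a $p$-dependent constant times $e^{-20}$. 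Summing over the $n$ nodes then yields $\Ex(|\edges(M', P)|) \le n/e^9$ with the constants comfortably in favor.

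The one subtlety I would flag as the main obstacle is the calibration of constants. Because $p$ is itself a small constant fixed in Lemma~\ref{smalllinksuccess3}, the prefactor carries a $1/p^2$ that must be dominated by the exponential factor $e^{-20}$; the constant $160$ in the definition of $\rho$ is chosen precisely to give enough slack for the algorithm's working range of $p$, and if one wished to accommodate even smaller $p$, this constant would simply need to be enlarged. Beyond this constant-juggling, the argument is routine: Theorem~\ref{seconddegbound} does all the probabilistic heavy lifting, and what remains is a tail-sum expansion plus a geometric series.
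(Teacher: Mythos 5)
Your proof is correct and follows essentially the same route as the paper's: both reduce $\Ex(|\edges(M', P)|)$ to $n \cdot \Ex\bigl(\deg(u)\cdot \mathbf{1}[\deg(u) > \rho]\bigr)$ and evaluate this from the exponential tail of Theorem~\ref{seconddegbound}, the only difference being that you sum the tail in unit steps via a geometric series while the paper groups degrees into dyadic blocks $[\rho 2^t, \rho 2^{t+1})$. The constant-calibration caveat you flag is genuine but applies equally to the paper's own step $e^{-p^2 \rho 2^t/8}\,\rho 2^{t+1} \le e^{-p^2 \rho 2^t/16}$, which likewise requires $\rho \le e^{10}/2$ (i.e., $p$ not too small relative to the constant $160$); in both arguments the remedy is exactly the one you name, namely enlarging the constant in the definition of $\rho$.
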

\begin{proof}

Recall that by Thm.\ \ref{seconddegbound}, 
$\Pro(deg(u) \geq d) \leq e^{\frac{-p^2 d}{8}}$, where $deg(u)$ is the degree of $u$. 
This implies that $\Pro(deg(u) \in [d, 2d)) \leq e^{\frac{-p^2 d}{8}}$.
Since $\rho = \frac{160}{p^2}$, we can verify using basic calculus that 
$e^{p^2 \rho 2^t/8} \geq \rho^2 2^{2t+2}$, 
for all $t$. Using this bound, we get,
\begin{align*}
& \Ex(|\edges(M', P)|)  \leq n \sum_{t = 0}^{\infty} \Pro(deg(u) \in [\rho 2^t, \rho 2^{t+1})) \rho 2^{t+1} \\
  & \leq n \sum_{t = 0}^{\infty} e^{\frac{-p^2 \rho 2^t}{8}} \rho 2^{t+1}
    \leq n \sum_{t = 0}^{\infty} e^{\frac{-p^2 \rho 2^t}{16}}
  \leq n \sum_{t = 0}^{\infty} e^{-10\cdot 2^t} \\
&  = \frac{n}{e^{10}} + n \sum_{t = 1}^{\infty} \frac1{e^{10\cdot 2^t}} 
   \leq \frac{n}{e^{10}} + \frac{n}{e^{10}} \sum_{t = 1}^{\infty} \frac1{e^{2^t}} \leq \frac{2n}{e^{10}} \leq \frac{n}{e^{9}} \ .
\end{align*}
%\footnote{The inequality is not true, and some other parts are a bit shaky. Needs careful verifying.}
%Thus, with probability $\frac12$, $|\edges(M', P)| \leq \frac{2n}{e^{9}} \leq \frac{n}{e^{8}}$ (by Markov's inequality).
\end{proof}

\noindent \textbf{Proof of Lemma~\ref{averageaffectancemean}}

\begin{proof}
The proof of this Lemma follows ideas from \cite{us:esa09full} and \cite{SODA11}.
We need to relate the idea of sparsity to the idea of ``independence'' used in \cite{us:esa09full}.

We say that a set of links is \emph{$q$-independent} if any two of them, $\ell = (x, y)$ and $\ell'= (x', y')$, satisfy the
constraint $d(x, y') \cdot d(y, x') \geq q^2 d(x, y) \cdot d(x', y')$.

We claim, 
\begin{claim}
Let $C$ be a sufficiently large constant.
Let $Q$ be a $C$-independent set, and
for any link $\ell$ in $\calT'$, let $Q^\ell$ be the links in $Q$ longer than $\ell$.
Then, $a_\ell(Q^\ell) + a_{Q^\ell}(\ell) = O(\Upsilon)$.
\end{claim}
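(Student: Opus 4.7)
The plan is to partition $Q^\ell$ into length classes and invoke the mean-power ``independence'' machinery of \cite{us:esa09full,SODA11}. Under mean power, the affectance from $\ell' = (u',v')$ on $\ell = (u,v)$ satisfies
\[
a^{\calM}_{\ell'}(\ell) \;\le\; c_\ell\,\frac{|\ell|^{\alpha/2}\,|\ell'|^{\alpha/2}}{d(u',v)^\alpha},
\]
so the only geometric quantity that matters is $d(u',v)$, which the $C$-independence condition $d(u',v)\cdot d(u,v')\ge C^2|\ell|\,|\ell'|$ lower-bounds in terms of the (length-scaled) cross distance.

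First I would partition $Q^\ell$ into length classes $Q_k = \{\ell' \in Q^\ell : 2^k|\ell| \le |\ell'| < 2^{k+1}|\ell|\}$, of which at most $O(\log \Delta)$ are nonempty. Within a single $Q_k$, applying $C$-independence pairwise among links of comparable length forces their senders to be separated by $\Omega(2^k|\ell|/C)$; a standard area-packing bound then limits the number of $Q_k$-senders inside an annulus of radius $r$ about $v$ to $O((r/(2^k|\ell|))^2)$. Summing the per-link mean-power bound geometrically over these annuli and using $\alpha>2$ for convergence gives that each length class $Q_k$ contributes only $O(1)$ to $a^{\calM}_{Q_k}(\ell)$ once the distance from $v$ exceeds the ``natural'' scale of $Q_k$.

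The main obstacle is collapsing the naive $O(\log\Delta)$ count of length classes into $O(\Upsilon) = O(\log\log \Delta + \log n)$. This refinement is precisely the content of the mean-power analysis in \cite{us:esa09full,SODA11}: one shows that at most $O(\log n)$ length classes can simultaneously host a sender in the ``near field'' of $\ell$ (those are the ones that would otherwise swamp the bound), while contributions from the remaining scales telescope to $O(\log\log\Delta)$ via a doubling argument over nested annuli. Because $Q$ is $C$-independent and $Q^\ell$ is length-sorted above $|\ell|$, the hypotheses of that analysis apply verbatim, and I would import it as a black box rather than rederive it.

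For the dual direction $a_\ell(Q^\ell)$, the link $\ell$ acts as a single interferer on each longer $\ell' \in Q^\ell$, giving
\[
a^{\calM}_\ell(\ell') \;=\; O\!\left(\frac{|\ell|^{\alpha/2}|\ell'|^{\alpha/2}}{d(u,v')^\alpha}\right),
\]
and the same length-class partition applies, now invoking the symmetric half of the $C$-independence bound on $d(u,v')$. Summing yields an identical $O(\Upsilon)$ estimate, and combining the two directions proves the claim.
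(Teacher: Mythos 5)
Your proposal is correct and follows essentially the same route as the paper: partition $Q^\ell$ by length scale, bound each of the $O(\log n)$ ``near'' length classes (lengths within a $\mathrm{poly}(n)^{1/\alpha}$ factor of $|\ell|$) by $O(1)$ using $C$-independence plus packing, and import the $O(\log\log\Delta)$ far-field mean-power bound from \cite{us:esa09full} as a black box. The paper just makes the split explicit at length $2(2\beta n)^{2/\alpha}\cdot|\ell|$, citing Lemma~4.4 of \cite{us:esa09full} for the longer links and Lemma~7 of \cite{KV10} for the per-class $O(1)$ bound, so the decomposition and the imported machinery coincide with yours.
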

\begin{proof}
Partition $Q^\ell$ into two sets: $Q^\ell_l$, with links length at
least $d(x, y) \cdot 2 (2 \beta n)^{2/\alpha}$, and $Q^\ell_s$, with
the remaining links. It follows from \cite[Lemma 4.4]{us:esa09full}
that $a^\calM_{Q^\ell_l}(\ell) + a^\calM_\ell(Q^\ell_l) = O(\log \log
\Delta)$. On the other hand, $Q^\ell_s$ can be partitioned into
$O(\log n)$ length classes. For such sets, it is known
\cite{us:esa09full} that $C$-independence, for some constant $C$,
implies feasibility. Let $Z$ be such a set. By Lemma 7 of \cite{KV10},
$a^\calM_{Z}(\ell) = O(1)$. Since $Z$ belongs to a single length
class, it is also possible to show (following arguments similar to
\cite{KV10}) that $a^\calM_\ell(Z) = O(1)$.  Thus,
$a^\calM_{Q^\ell_s}(\ell) + a^\calM_\ell(Q^\ell_s) = O(\log n)$,
summing over the $O(\log n)$ such $Z$'s. The claim follows.
\end{proof}

By Lemma \ref{sparsitytoindependence} below, we know that $\calT'$ can
partitioned into a constant number of $C$-independent sets. 
Let $Q_1, Q_2, \ldots, Q_t$ be a partition of $L'$ into $t$ different $C$-independent sets.
For a link $\ell$, let $Q^\ell_i = \{\ell' \in Q_i : \ell' \ge \ell\}$.
Then, 
\begin{align*}
a^\calM_{\calT'}(\calT') 
% = \sum_{i=1}^t a^\calM_{\calT'}(Q_i) 
   & \leq \sum_{\ell=(x, y) \in \calT'} \sum_{i=1}^t a^\calM_{\ell}(Q^{\ell}_i) + a^\calM_{Q^{\ell}_i}(\ell)\\
   & = t |\calT'| O(\Upsilon) = O(|\calT'| \Upsilon)\ ,
%\sum_{i=1}^t \sum_{\ell \in \calT'} O(\Upsilon) 
\end{align*}
since $t = O(1)$.
%
%This completes the proof of the Lemma.
\end{proof}

\begin{lemma}
$\calT'$ can be partitioned into a constant number of $C$-independent sets.
\label{sparsitytoindependence}
\end{lemma}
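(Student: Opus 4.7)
The plan is to greedily $O(1)$-color the conflict graph on $\calT'$, where two links $\ell = (x,y)$ and $\ell' = (x',y')$ are adjacent iff they fail the $C$-independence condition $d(x, y') \cdot d(y, x') \ge C^2 d(x, y) d(x', y')$. I would process the links of $\calT'$ in non-increasing order of length and assign each link the smallest color unused by any already-processed conflicting link. Same-color pairs are then automatically $C$-independent, yielding the desired partition, provided the number of colors can be bounded by a constant.

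The heart of the argument is the following counting claim: for each fixed $\ell = (x,y) \in \calT'$ of length $d$, the number of links $\ell' = (x', y') \in \calT'$ of length $d' \ge d$ that conflict with $\ell$ is $O(1)$. The conflict condition gives $d(x, y') d(y, x') < C^2 d d'$, hence (up to a symmetry between $x,x'$ and $y,y'$) at least one of these factors is below $C\sqrt{d d'}$, say $d(x, y') < C\sqrt{d d'}$, so $y' \in B(x, C\sqrt{d d'})$.

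I would then group the conflicting longer links by length class $d' \in [2^j d, 2^{j+1} d)$ for $j \ge 0$, in which case $y' \in B(x, C \cdot 2^{(j+1)/2} d)$. To leverage the $O(1)$-sparsity of $\calT'$ (Thm.~\ref{constantsparsity}), I would cover this ball by smaller balls of radius $2^{j-3} d$, so that sparsity applies and bounds the number of links of length at least $2^j d$ hitting each small ball by $O(1)$. The cover uses $O(C^2 \cdot 2^{-j})$ small balls (up to a universal constant from the radius ratio), giving $O(C^2 / 2^j)$ conflicting links in length class $j$. Summing the resulting geometric series over $j \ge 0$ gives $O(C^2)$, and the symmetric case where $d(y, x') < C\sqrt{d d'}$ yields another $O(C^2)$, for a total of $O(C^2) = O(1)$ conflicts already processed.

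Greedy coloring therefore uses at most $O(C^2) + 1 = O(1)$ colors, and the resulting partition satisfies the lemma. The main technical obstacle is selecting the covering radius so that sparsity applies at each scale while still producing a convergent sum; the choice $2^{j-3} d$ makes both work simultaneously because the ball radius $C \cdot 2^{(j+1)/2} d$ grows much more slowly than the required link length $2^j d$. Equally crucial is processing in non-increasing order of length: a shortest-first order would not give a constant per-link bound, since a short link can have arbitrarily many conflicts with even shorter links (sparsity only controls longer links).
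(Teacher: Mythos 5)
Your proof is correct, and its skeleton is the same as the paper's: both arguments observe that the conflict graph (links adjacent iff not $C$-independent) has the property that each link conflicts with only $O(1)$ \emph{longer} links, and then color greedily in length order, which is exactly an $O(1)$-degeneracy argument. Where you diverge is in how the $O(1)$ counting claim is established. The paper shows separately that any longer link whose endpoints are both at distance at least $(2C)^2 d(u,v)$ from $u$ is automatically $C$-independent of $\ell=(u,v)$ (a short case analysis on cross-distances), so all conflicts live in a single ball of radius $O(C^2)\cdot d(u,v)$; one application of sparsity to an $O(1)$-cover of that ball by balls of radius $d(u,v)/8$ then finishes. You instead read the containment directly off the negated independence inequality -- one of the two cross-distance factors must be below $C\sqrt{dd'}$ -- and then stratify the longer conflicting links by length class $j$, covering the (growing) ball $B(x, C2^{(j+1)/2}d)$ by balls of radius $2^{j-3}d$ so that sparsity applies at the matching scale, and summing the resulting geometric series $\sum_j O(C^2/2^j)$. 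Your multiscale version is self-contained (it needs no separate ``far implies independent'' lemma) and makes the dependence on $C$ explicit as $O(C^2)$, at the cost of a slightly more delicate choice of covering radius; the paper's version is shorter once the far-link lemma is granted. Both are valid, and your closing remark about why longest-first (rather than shortest-first) ordering is essential is exactly the right point.
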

\begin{proof}
Consider any link $\ell$. We claim that there are $O(1)$ links $\ell'$ at least as long as $\ell$ such that $\ell$ and $\ell'$ are \emph{not} $C$-independent. This claim proves the lemma by the following algorithm. 
Sort the links in an ascending order of their length, breaking ties arbitrarily.

Now consider the graph on links where there is an edge between links if they are not $C$-independent. 

By the claim, all links have $O(1)$ edges to links after them in the ascending order. Such a graph is $O(1)$-colorable, where each color represents an independent set in graph theoretic sense, and thus a $C$-independent set according to our definition.

Now we prove the claim. Recall that $\calT'$ is $\gamma_3$-sparse for some constant $\gamma_3$. Consider the link $\ell = (u, v)$ and a ball of radius $(2C)^2 \cdot d(u, v)$ around $u$. By a basic geometric
argument, this ball can be covered by $O(1)$ balls of radius $d(u, v)/8$. By the definition of sparsity, there can be at most $\gamma_3$ links of length $d(u, v)$ or higher that have one endpoint in each of the smaller balls. Thus, the larger ball also
contains only $O(1)$ such links. We now claim that all other links, i.e.,  $\ell' = (u', v')$ such that $\min(d(u', u),d(v',u)) \geq (2C)^2 \cdot d(u, v)$  are such that $\ell$ and $\ell'$ are $C$-independent.  
First, assume that $d(u', v) \geq \frac14 d(u', v')$. Then $d(u', v) \cdot d(u, v') \geq \frac14 d(u', v') \cdot (2C)^2 \cdot d(u, v) = C^2 d(u', v') d(u, v)$ which implies $C$-independence. On the other hand, if 
$d(u', v) < \frac14 d(u', v')$, then 
$d(u, v') \geq d(u', v') - d(u', v) - d(u, v) \geq d(u', v') - \frac{5}{4} d(u', v) \geq \frac{11}{16} d(u', v')$, from which $C$-independence follows by similar computations.
%\footnote{To be fixed}
\end{proof}

\section{A Short Primer on Sparsity,\\ Amenability and Feasibility}
\label{sec:amenabilityprimer}

% Following \cite{KesselheimSoda11},
%\cite{SODA12}, 
% for two links $\ell = (u, v), \ell' = (u', v')$, define $d(\ell, \ell') = \min\{d(u, v'), d(v, u')\}$. If $d(u, v) \leq d(u', v')$ define
% the function $f_\ell(\ell') = \min\left\{1, \frac{d(u, v)^{\alpha}}{d(\ell, \ell')^{\alpha}}\right\}$. Let $f_{\ell}(\ell) = 0$ and for $d(u, v) > d(u', v')$
% let $f_{\ell}(\ell') = 0$. 

In \cite{SODA12}, 
a set of links $L$ was defined to be $\eta$-\textbf{amenable} if the following holds:
for any link $\ell$ ($\ell$ not necessarily a member of $L$), 
%\[ \sum_{\ell' \in L} f_{\ell}(\ell') \leq \eta\ , \]
$\sum_{\ell' \in L} f_{\ell}(\ell') \leq \eta$,
for a function $f$ (see Eqn.~\ref{eq:amenability}), for some $\eta$. Actually, in \cite{SODA12},  $\eta$ is implicitly considered to be a constant, and just the term \textbf{amenable} is used. The definition extends naturally to arbitrary $\eta$.
It was shown in \cite{KesselheimSoda11} that an 
$\eta$-\textbf{amenable} set $L$ has a feasible subset of size $\Omega\left(\frac1{\eta} |L|\right)$.

Now the final ingredient needed is to tie sparsity to feasibility (and
thus get Thm.~\ref{sparsecapacity}). We claim that sparsity as defined
in this paper implies amenability. This is implicit in
\cite{SODA12}. Specifically, in proving the main Lemma 4 of \cite{SODA12}, it is first shown that the structure in question (which happens to be a Minimum Spanning Tree on the set of nodes) is $O(1)$-sparse (Lemma 5) and then this is used to show that the structure is amenable (which then implies a large feasible subset).

\section{A Note on the Approximation Factor for Distributed Scheduling}
\label{sec:dualclarification}

% The link $(y, x)$ is known as the \emph{dual} of link $(x, y)$,
% following \cite{KV10}. Link sets $X$ and $Y$ are called each others
% dual if they each contain duals of every link of the other set. 

If acknowledgments have to be explicitly
implemented, the algorithm of \cite{KV10,icalp11} produces a schedule
length of $O((T + T') \cdot \log n)$, where $T$ is the optimal
schedule for the input link set, and $T'$ is the optimal schedule for
the dual set of the input set, which may be larger than $O(T \log  n)$. 
% Since it is possible that $T' =
% \Omega(T \log n)$, this implies a $O(\log^2 n)$-approximation in the
% worst case. 
For our instance, this problem simply
is not relevant. The constructed link set
$\calT$ \emph{is its own dual}, and thus a $O(\log n)$-approximation
factor can be safely asserted.

\end{document}